\definecolor{light-gray}{gray}{0.9}
\newtheorem{definition}{Definition}%
				\newcommand{\bt}[1][]{\ensuremath{\ifthenelse{\equal{#1}{}}{\mathit{BT}}{\mathit{BT}(#1)}}\xspace}
			\newcommand{\pref}{\ensuremath{\succsim}}
			\newcommand{\spref}{\ensuremath{\succ}}
	\newtheorem{lemma}{Lemma}%
	\newtheorem{theorem}{Theorem}%
	\newtheorem{proposition}{Proposition}%
	\newtheorem{corollary}{Corollary}%
	\newtheorem{example}{Example}
	\newtheorem{assumption}{Assumption}
	\newtheorem{remark}{Remark}
	\newlength{\wordlength}
\newcommand{\nbh}[1][]{
	\ifthenelse{\equal{#1}{}}{\nu}{\nu(#1)}
}
\newcommand{\cstr}[1][]{
	\ifthenelse{\equal{#1}{}}{\mathscr S}{\cstr(#1)}
}
\newcommand{\choice}[1][]{
	\ifthenelse{\equal{#1}{}}{\mathit{C}}{\choice(#1)}
}
\begin{document}

		  \title{How long is a piece of string? \\An exploration of multi-winner approval voting and ballot-length restrictions}


	\author{Barton E. Lee} \ead{barton.e.lee@gmail.com}
	
	\address{Data61, CSIRO and the School of Economics, UNSW,\\ Sydney, Australia}
	
%

\begin{abstract}

Multi-winner approval elections are seen in a variety of settings ranging from academic societies and associations to public elections. In such elections, it is often the case that ballot-length restrictions are enforced; that is, where voters have a limit on the number of candidates which they can vote for. Despite this common feature, there does not seem to be any theoretical justification for ballot-length restrictions (Laslier and Van der Straeten~\cite{Las16}).

This work endogenously derives the set of voter best-response ballot lengths under complete information and with general assumptions on voter utilities and voting rules. These results provide justification for some ballot-length restrictions observed in practice, however when considering equilibrium outcomes our analysis shows that this justification is no longer valid. Equilibrium analysis is considered for voters with lazy and truth-bias second-order tendencies and the equilibrium solution concept is pure-Nash equilibria. 

The key insights show that ballot-length restrictions or institutional features which make voting costly may lead to instability in election outcomes when voters have diverse preferences, via the non-existence of equilibria. On the other hand, when equilibria do exist they satisfy desirable properties which are not guaranteed by equilibria attained under costless voting and in the absence of ballot-length restrictions. In summary our results highlight a stark trade-off between stable and desirable election outcomes.

	\begin{keyword}
multi-winner elections \sep
approval voting \sep
ballot-length restrictions \sep
strategic voting \sep
lazy and truth-bias
	\end{keyword}
\end{abstract}

\maketitle




\section{Introduction}






In this paper we focus on multi-winner approval elections whereby voters submit approval ballots over the candidate set and based on these a subset of candidates with some predetermined size is elected i.e. the `winning committee'. Approval ballots are unranked ballots which voters use to express a subset of candidates which they `approve' of. The size, or length, of a given voter's approval ballot refers to the number of candidates which they approve of and this number may vary across voters based on their preferences. 

When voters have dichotomous preferences there is no issue with limiting voter preference elicitation to approval ballots (Brams and Fishburn~\cite{BsjFnj78}) however in many settings, this is not a realistic assumption. Without assuming dichotomous preferences, the restriction of preference elicitation to approval ballots induces strategic behaviour which can lead to a voter varying both the set of candidates which she approves and the length of her approval ballot (Niemi~\cite{Nrg84}).

Multi-winner approval voting is currently used in academic societies, public elections and the election of international officials. Academic societies such as the American Mathematical Society, the Institute of Management Science and the Institute of Electrical and Electronics Engineers (IEEE) utilise this voting method. The latter has over 400,000 members as of 2016 making it the world's largest technical professional organisation. Approval-based methods are also used in the election of seven regional government officials in Zurich, and for the election of the secretary-general of the United Nations.

A common feature observed in practice is the enforcement of ballot-length restrictions; that is, an upper bound on the number of candidates which a voter may include in her approval ballot. For example the approval-based election for the regional government representatives in Zurich restricts the number of candidates a voter can approve to seven (Lachat et al. \cite{LrLjfSkvd15}) - which is also the number of candidates to be elected. Another extremely common example is plurality voting whereby a `one-vote-per-person' rule is applied and a single candidate is elected. However, it has been noted in the literature that there ``does not seem to [be] any specific theoretical property'' to justify such restrictions (Laslier and Van der Straeten~\cite{Las16})\footnote{Under a restricted setting of voter preferences which are also dichotomous Aziz et al.~\cite{Aziz17} show some desirable properties of not enforcing ballot-length restrictions. Also Elkind et al.~\cite{Elk17} show that the Bloc approval-based voting rule which requires voters to approve of precisely the number of candidates to be elected satisfies a desirable axiom called fixed majority. However, both \cite{Aziz17} and \cite{Elk17} consider a non-strategic environment.}.

This work considers strategic voters in a complete information, multi-winner approval election with general and heterogenous utilities. Our results provide a characterisation of voter best-response approval ballots lengths - leading to a potential justification of ballot-length restrictions equal to the number of candidates to be elected. In particular, this provides some justification for the restrictions enforced by plurality voting and the previously mentioned elections in Zurich. This result applies to a class of approval voting rules which includes the most standard `AV-rule' whereby the candidates receiving the highest number of approvals are elected as winners until the winning committee (of predetermined size) is filled. 

However our results, when considering (pure-Nash) equilibrium analysis, show that this justification is misguided. If voters are incentivised to submit `shorter' approval-ballots (i.e. approving a minimal number of candidates which guarantees maximal utility) via increases to the cost of voting or the enforcement of ballot-restrictions, then equilibria do not exist unless there is a high degree of consensus among voter preferences. This suggests some level of instability. Conversely when no such incentives exists; for example when voting is costless and in the absence of ballot-length restrictions, then we show that an equilibrium always exists. The effect of these incentives on voter behaviour is formalised via two commonly applied models of voters with second-order tendencies referred to as laziness and truth-bias -- see for example Desmedt and Elkind~\cite{DyEe10}, Dutta and Laslier~\cite{DbLjf10}, Elkind et al.~\cite{Elk15}, Endriss~\cite{End13}, and Xia and Conitzer~\cite{XiCv10}. The former generalises the idea that if a voter can not affect the election outcome they will abstain from voting, the latter implies that a voter which multiple best-responses will strictly prefer a sincere best-response\footnote{It is shown in section 5.2 that at least one best-response ballot is also sincere.}. These refinements lead to lazy-Nash and sincere-Nash (pure) equilibria.

Interestingly, we show that the relatively few equilibria which do exist, when voting is costly and/or ballot-length restrictions are applied, satisfy desirable properties which are not guaranteed to be satisfied when voting is costless and in the absence of ballot-length restrictions. Thus, our results highlight a trade-off between achieving stability and satisfying other desirable features of election outcomes. 


\paragraph{Contributions} 
Our contributions are three-fold: firstly, we present a general yet analytically tractable model to study voter behaviour in multi-winner elections utilising a well-known class of voter utility functions; secondly, we provide insights into the effect of ballot-length restrictions via analysis of best-response behaviour and equilibria analysis; and thirdly, our results extend and complement the existing literature on single-winner approval elections.

\paragraph{Outline and structure}
Section 2 provides a brief survey of the related literature, section 3 introduces the model formally and then section 4 provides our key analysis of voter best response behaviour in the presence of ballot-length restrictions. Lastly, section 5 considers equilibrium analysis of the standard `AV-rule' when voters are either lazy or truth-biased. This provides a heuristic comparison of the effect of ballot-length restrictions and costly voting on equilibrium outcomes.


\section{Related literature}



Strategic voting in single-winner elections with preferential voting has been heavily studied by previous scholars. However strategic voting in two closely related areas; multi-winner elections and, approval-based elections have received relatively little attention individually, let alone when studied in combination. Multi-winner and single-winner elections are considered in Myserson~\cite{Myer93}, and Cox~\cite{Cox85, Cox87} for approval voting however voters are assumed to be sincere\footnote{Strictly speaking the voters do follow a strategy, however it is determined solely by their own cardinal utilities over candidates and do not depend on beliefs, or knowledge, of other voters' strategies.}. Brams and Fishburn~\cite{BsjFnj78} consider strategic voters in single-winner approval elections where voters have dichotomous preferences and both Niemi~\cite{Nrg84} and De Sinopoli et al.~\cite{DeSinop06} provide analysis for voters with non-dichotomous preferences. Three distinct incomplete information models of single-winner approval-based elections with strategic voters have been formulated and studied by Myerson and Weber~\cite{MrbWrj93}, Laslier~\cite{Las09} and Myerson~\cite{Myer1, Myer2, Myer3}. However none of these papers, shed light on whether their results extend to the multi-winner setting, nor the effect of restricting voter strategy spaces, via ballot-length restrictions. One notable exception which considers strategic voters in multi-winner approval-based elections is Laslier and Van der Straeten~\cite{Las16} which will be discussed in greater detail at the end of this section\footnote{A number of computational and algorithmic papers consider multi-winner approval elections, however their focus differs substantial from this paper; some examples include Aziz et al.~\cite{AGGMMW15}, Endriss~\cite{End07, End13}, and  Procaccia and Rosenschein~\cite{Proc07}.}.



Elkind et al.~\cite{Elk15} study plurality (single-winner) elections with strategic voters, focusing on the effect of three different tie-breaking rules on voter behaviour and pure Nash equilibria (PNE). The work analyses two models of strategic voters (lazy and truth-bias models) with complete information which builds on the work of Desmedt and Elkind~\cite{DyEe10} and Obraztsova et al.~\cite{Obr13}. Our work does not focus on the effect of different tie-breaking rules but nonetheless complements this body of work by providing some more general results for multi-winner approval elections which can be viewed as a generalisation of plurality voting. In our equilibria analysis (section 5), we focus exclusively on lexicographic tie-breaking and present a series of results for the multi-winner setting. When restricting these results to the special case of single-winner voting we attain a theorem of Elkind et al.~\cite{Elk15} as a corollary (see \Cref{ifandonlyifk=1} in section 5).

Laslier and Van der Straeten~\cite{Las16} consider strategic voting in multi-winner elections under a specific voting rule and the assumption that voter utilities are given by an additive sum (or equivalently an average utility). The paper focuses on an incomplete information score uncertainty model which is related to the work of Myerson and Weber~\cite{MrbWrj93}. Our work focuses on the complete information setting but allows for a general class of voting rules, does not restrict voters to have additive sum utilities and considers voters with second-order tendencies.  In addition our results provide insights into the relationship between voter utilities and the number of candidates which they will optimally vote for, and the effect of ballot-length restrictions.


\section{Model}



\subsection{Voters, candidates and preferences}


This subsection provides a formal introduction to the model, voter utility functions and characterisations of these functions. Some preliminary lemmas are also stated which highlight the implications of our assumptions upon voter utilities over committees (election outcomes) - these will be utilised in later sections.

Let $N$ be a set of voters and $C$ be a set of candidates with representative elements $i$ and $c$, respectively. For each voter $i\in N$ we express cardinal utilities over the candidate set $C$ via a utility function
\begin{align*}
u_i: C&\rightarrow \mathbb{R}\\
c&\mapsto u_i(c).
\end{align*} 
We shall assume voters have strict preferences; that is, for all $c, c'\in C$ such that $c\neq c'$ we have $u_i(c)\neq u_i(c')$. Sometimes it will be convenient to express voter preferences by the (strict) ordinal relation $\spref_i$ such that for $c, c'\in C$
$$c\spref_i c' \iff u_i(c)> u_i(c').$$
For clarity we have assumed that all voters have strict preferences however, all of the results can be naturally extended to cover weak preferences $\pref_i$ by considering equivalence classes of candidates for each voter $i\in N$.

Under approval-based voting each voter $i\in N$ submits an \emph{approval ballot} $A_i\subseteq C$ which is a subset of candidates that she will declare her `approval' for. Note that $A_i$ is an unranked ballot and the size, or length, of the ballot $|A_i|$ may vary across voters. We refer to the list $A=(A_1, \ldots, A_n)$ of approval ballots as the \emph{ballot profile}. 

We will consider approval-based multi-winner voting rules that take as an input a tuple $(N, C, A, k)$ of voters $N$, candidates $C$, a ballot profile $A$ and a positive integer $k\le |C|$. The output or election outcome is a subset $W\subseteq C$ of size $k$. We will refer to the set $W$ as the \emph{winning set} or \emph{winning committee}. Throughout the committee size will be denoted by $k$. Definitions and results will often be presented without reference to the parameter $k$ but should be understood that $k$ is some fixed positive integer no greater than $|C|$.

In the multi-winner approval setting a committee, say $W\subseteq C$ with $|W|=k\ge 1$, is elected and hence to describe voter preferences over election outcomes, rather than candidates, it is necessary to extend each voter's utility function $u_i(\cdot)$ to a set-extension utility $U_i(\cdot)$. A \emph{set-extension} utility provides a functional form which extends a utility function from candidates to the real line, 
$$u_i:C\rightarrow \mathbb{R},$$
 to a utility function from subsets of candidates of size $k$ to the real line,  
$$U_i:2^C|_{k}\rightarrow \mathbb{R},$$
where $2^C|_{k}$ denotes the subset of the power-set of $C$ which contains sets of size $k$.  

In this paper we assume OWA (order weighted average) set-extensions which were introduced by Yager~\cite{YaR88}. OWA utilities are of the following form: For a given voter $i\in N$, let $W\subseteq C$ of size $k$ and consider a relabelling of the elements of $W$ such that 
$$W=\{c_1, \ldots, c_k\},\qquad \text{where $c_j\spref_i c_{j+1}$ for all $j<k$.}$$
Voter $i$'s OWA set-extension utility, derived from $u_i:C\rightarrow \mathbb{R}$, is
\begin{align}\label{genutil}
U_i(W)=\sum_{j=1}^k \lambda_j^{(i)} u_i(c_j) \qquad \text{where $\lambda_j^{(i)}\ge 0 \quad \text{for all } j\le k$}.
\end{align}
Note that each coefficient $\lambda_j^{(i)}$ is not associated with any particular candidate but rather with the position of a candidate within the committee $W$(with respect to $\spref_i$).

To avoid degenerate cases we assume at least one $\lambda_j^{(i)}> 0$. This assumption is simply for convenience, if for some voter $i\in N$ $\lambda_j^i=0$ for all $j$ our analysis would not be affected since any such voter would be indifferent between all election outcomes and could be assumed to optimally not participate in the election i.e $A_i=\emptyset$. Note that the utility can be compactly written as a dot product in $\mathbb{R}^k$
\begin{align}\label{genutil2}
U_i(W)=\lambda^{(i)}\cdot \hat{u}_i(W),
\end{align}
where $\lambda^{(i)}=(\lambda_1^{(i)}, \ldots, \lambda_k^{(i)})$ and $\hat{u}_i(W)=(u_i(c_1), \ldots, u_i(c_k))$ is an ordered (descending) vector of utility values attained from each candidate in $W$. Thus, when 
\begin{enumerate}
\item $\lambda_1^{(i)}=1$ and $\lambda_j^{(i)}=0$ for all $j> 1$ we attain the \emph{best set-extension}; whereby a voter only attains utility from their most preferred candidate in $W$,
\item  if $\lambda_k^{(i)}=1$ and  $\lambda_j^{(i)}=0$ for all $j< k$ we attain the \emph{worst set-extension}; whereby a voter only attains utility from their least preferred candidate in $W$, 
\item if $\lambda_j^{(i)}=1$ for all $j$ then we attain the \emph{natural additive} utility (or equivalently the average utility); whereby a voter's utility from $W$ is simply the sum of the utility of each candidate in $W$.
\end{enumerate}
More generally we only enforce the nonnegative restriction that $\lambda_j^{(i)}\ge 0$ for all $j\le k$ as in (\ref{genutil}). This restriction has the immediate implication that a voter always weakly prefers a committee $W$ over $W'$ (i.e. $U_i(W)\ge U_i(W')$) if $W$ is attained from swapping a candidate $c'\in W'$ with a more preferred candidate $c\in W$. This is formalised in the following lemma.

\begin{lemma}\label{weakmonoutil}
Let $W \subseteq C$ and suppose there exists $c\in W, c'\notin W$ such that $c'\spref_i c$ then
$$U_i(W \cup\{c'\} \backslash \{c\})\ge U_i(W).$$
\end{lemma}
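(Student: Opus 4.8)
The plan is to reduce the claim to a statement about the sorted utility vectors and then exploit the nonnegativity of the OWA weights. Write $W' = W \cup \{c'\} \backslash \{c\}$, so that $|W'| = |W| = k$ and $W'$ is obtained from $W$ by deleting $c$ and inserting the strictly more preferred candidate $c'$. Recalling the compact form $U_i(W) = \lambda^{(i)} \cdot \hat{u}_i(W)$ from~\eqref{genutil2}, where $\hat{u}_i(W)$ lists the utilities $u_i(\cdot)$ of the members of $W$ in descending order, it suffices to show that $\hat{u}_i(W')$ dominates $\hat{u}_i(W)$ coordinatewise, i.e.\ that the $j$-th largest utility in $W'$ is at least the $j$-th largest utility in $W$ for every $j \le k$. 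Since each weight satisfies $\lambda_j^{(i)} \ge 0$, coordinatewise domination immediately yields $U_i(W') = \sum_{j=1}^k \lambda_j^{(i)} (\hat{u}_i(W'))_j \ge \sum_{j=1}^k \lambda_j^{(i)} (\hat{u}_i(W))_j = U_i(W)$, which is the desired inequality.

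To establish the coordinatewise domination I would use a threshold (counting) argument. The two multisets of utility values $\{u_i(d) : d \in W\}$ and $\{u_i(d) : d \in W'\}$ agree on all candidates except the swapped pair: the first contains $u_i(c)$ where the second contains $u_i(c')$, and by hypothesis $u_i(c') > u_i(c)$. Hence for every threshold $t \in \mathbb{R}$ the number of members of $W'$ whose utility is at least $t$ is no smaller than the corresponding count for $W$: if $u_i(c) \ge t$ then also $u_i(c') \ge t$ and the two counts coincide, while if $u_i(c) < t$ the count for $W$ loses the contribution of $c$ and so cannot exceed that of $W'$. It is a standard fact that this ``at least as many elements above every threshold'' condition is exactly equivalent to coordinatewise domination of the descending sorted vectors, which gives $(\hat{u}_i(W'))_j \ge (\hat{u}_i(W))_j$ for all $j$.

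The only subtle point is the re-sorting step: replacing $c$ by $c'$ can move several candidates to new positions within the ordered committee, so one cannot argue position-by-position on the original labelling. The threshold characterisation sidesteps this by comparing the two sorted vectors globally rather than tracking individual candidates, and I expect this to be the one place where care is needed; everything else is immediate from $\lambda_j^{(i)} \ge 0$. Alternatively, one can give a direct insertion argument: if $c$ sat at position $m$ in $W$ and $c'$ lands at position $m' \le m$ in $W'$, then positions $1, \ldots, m'-1$ and $m+1, \ldots, k$ are unchanged, position $m'$ improves from $u_i(c_{m'})$ to $u_i(c')$, and positions $m', \ldots, m$ each shift up to the value previously held one slot higher, so every coordinate weakly increases; the threshold argument is simply the cleaner packaging of this same idea.
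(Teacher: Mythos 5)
Your proof is correct and takes essentially the same route as the paper's: both reduce the claim, via the dot-product form $U_i(W)=\lambda^{(i)}\cdot \hat{u}_i(W)$, to component-wise domination of the descending sorted utility vectors and then invoke monotonicity of the dot product with the nonnegative weight vector $\lambda^{(i)}$. The only difference is that the paper simply asserts the component-wise inequality $\hat{u}_i(W\cup\{c'\}\backslash\{c\})\ge \hat{u}_i(W)$, whereas you carefully justify it with the threshold-counting (equivalently, insertion) argument, which fills in the one step the paper leaves implicit.
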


\begin{proof}
Consider the dot product interpretation of voter utilities (\ref{genutil2}). First note that for any fixed vector $\lambda^{(i)}\in \mathbb{R}^k$ such that $\lambda^{(i)}\ge \boldsymbol{0}$ (component-wise inequality) the dot product function 
\begin{align*}
f: \mathbb{R}^k&\rightarrow \mathbb{R}\\
\boldsymbol{x}&\mapsto \lambda^{(i)}\cdot \boldsymbol{x}
\end{align*}
is a weakly increasing function. Further, if  $c'\spref_i c$ then $\hat{u}_i(W \cup\{c'\} \backslash \{c\})\ge \hat{u}_i(W)$ (component-wise inequality). Combining these two facts completes the proof.
\end{proof}

Notice that despite our assumption that voters have strict preferences over $C$, a voter's OWA set-extension utility $U_i$ over committees need not be strict.

\begin{definition}\label{j-set-util}\emph{[$j^*$ set-extension]}\\
Given a voter $i$ with OWA utility set-extension, if $j^*$ is the smallest integer such that
$$\lambda_\ell^{(i)}=0 \qquad \text{for all } \ell> j^*,$$
then we say that voter $i$ has a $j^*$ set-extension utility. If no such $j^*$ exist we define $j^*=k$ by default.
\end{definition}

Applying this definition we see that the best-set extension is a $j^*=1$ set-extension, whilst both the worst-set extension and the natural additive utilities are $j^*=k$ set-extensions.

 In the definition above we have denoted voter $i$'s set-extension value by $j^*$. As will be seen later, this paper allows distinct voters in $N$ to have distinct set-extension values; that is, heterogenous utility functions and set-extension values. When this is important will we denote a voter $i$'s set-extension value by $j^*(i)$.




\subsection{Multi-winner voting rules}


In this paper we consider a class of multi-winner approval voting rules which includes the commonly used AV-rule whereby the $k$ candidates with the highest approval scores are elected, and lexicographic tie-breaking is used. The class is a subfamily of non-degenerate best-$k$ scoring rules. Best-$k$ scoring rules were introduced and characterised by Elkind et al.~\cite{Elk17}. Our focus on this class derives from two monotonicity properties -- relative rank monotonicity and monotonic robustness (to be defined later) -- satisfied by such (non-degenerate) rules which will later be shown to lead to a characterisation of voter best-response behaviour.

We begin by providing a formal definition of best-$k$ scoring rules under approval-based voting. For further details, properties and a more general definition of best-$k$ scoring rules we refer the reader to \cite{Elk17}.

\begin{definition}\label{def1}\emph{[Best-$k$ score rule (approval-based)]}\\
Given a candidate set $C$, a set of voters $N$ and a ballot profile $A$, let $f(c, A)$ be a (scoring) function mapping a candidate and ballot profile pair $(c,A)$ to a real number such that a weak order over $C$ can be formed. A multi-winner voting rule which elects the candidates in the top $k$ positions into the winning committee $W$ is a best-$k$ score rule (deterministic tie-breaking when necessary).
\end{definition}

The standard AV-rule is a member of the best-$k$ scoring rule family. To fit with the above definition we define the scoring function as
\begin{align}\label{av}
f(c,A)=|\{i\in N\, :\, c\in A_i\}|.
\end{align}
Thus, $f(c,A)$ denotes the number of voters who included the candidate $c$ in their approval ballot $A_i$. This is referred to as an approval score. The best-$k$ rule then select the $k$ candidates with highest $f(c,A)$ values (i.e. highest approval scores) using a tie-breaking rule where necessary. However, the family of best-$k$ scoring rules also includes dictatorial rules whereby a single voter determines the entire election, or predetermined election rules whereby the candidates to be elected are determined before (or independently) of the approval ballot profile $A$. These two rules are in some sense degenerate and for this reason we focus on what we call non-degenerate best-$k$ scoring rules. 

To avoid degenerate voting rules we assume that all voters are treated equally (but candidates need not be). First, given an approval ballot profile $A$ define
$$S(c, A)=\{i\in N: c\in A_i\}\qquad \text{ and } \qquad s(c, A)=|S(c,A)|.$$
That is, $S(c,A)$ denotes the set of voters approving of candidate $c$ under the ballot profile $A$ and $s(c,A)$ denotes the approval score (or equivalently, the number of approving voters). We define the following subfamily of non-degenerate voting rules.

\begin{definition}\emph{[Non-degenerate best-$k$ score rule (approval-based)]}\\
Let $f(c, A)$ be a (scoring) function for a best-$k$ score rule. The function is said to be non-degenerate if in addition to \Cref{def1}
\begin{enumerate}
\item the function only depends on the candidate, and the number of approvals
$$f(c,A)=g(c, s(c,A))$$
\item the function $g: C\times \mathbb{R} \rightarrow \mathbb{R}$ is strictly increasing in the second argument and $g(c, 0)=0$ for all $c\in C$. 
\end{enumerate}
\end{definition}

Note that any non-degenerate best-$k$ scoring rule necessarily ensures anonymity, hence a dictatorial rule is degenerate. Also the condition $g(c, 0)=0$ implies that a predetermined election rule is also degenerate. The standard AV-rule (\ref{av}) is of course non-degenerate. Other less-standard election rules are also non-degenerate for example; candidates need not be treated equally this would be achieved with the following scoring function
$$f(c, A)=\lambda(c)\, s(c,A),$$
where $\lambda:C \rightarrow \mathbb{R}_{>0}$.

We now define the key monotonicity properties ensured by non-degenerate best-$k$ rules which will be crucial for characterising voter behaviour.


Recall that a voter $i\in N$ submits an approval ballot denoted by $A_i\subseteq C$. Suppose $c\in C$ such that $c\notin A_i$ (i.e. candidate $c$ is not approved by voter $i$ in $A_i$) we call the alternate approval ballot
$$A_i'=A_i\cup\{c\},$$
a \emph{reinforcement} of $c$ with respect to $A_i$. We say that a ballot profile $A'=(A_1', \ldots, A_n')$ \emph{reinforces} $c$ relative to another ballot profile $A=(A_1, \ldots, A_n)$ if there exists some ballot $A_i'$ which reinforces $c$ relative to $A_i$ and all other ballots are unchanged. Throughout we shall use the convention that given a ballot profile $A$ we denote the election outcome $W=W_A$.




We now define two monotonicity properties which are satisfied by all non-degenerate best-$k$ voting rules and are sufficient conditions for our results to apply.
 
The first monotonicity property captures the standard notion of candidate monotonicity by requiring that a candidate $c\in W_A$ must still be elected under any alternate ballot $A'$ which reinforces $c$. However, the property also extends to instances where another candidate, say $c'\in C$, receives less approvals and then ensures that $c$ must still be elected. Informally the property captures the notion that if an elected candidate's approval score weakly increases, say $c\in W_A$, whilst all other candidates receive weakly lower approvals under $A'$ then it must still be the case that $c\in W_{A'}$. 

\begin{definition}\label{mon3}\emph{[Relative rank monotonicity]}\\
Let $(N, C, A, k)$ be a voting instance with outcome $W_A$ and let $c\in W_A$. Let $A'$ be a ballot profile such that 
$$s(c, A')\ge s(c, A)\qquad \text{ and } \qquad  s(c', A')\le s(c',A)\qquad \forall c'\in C\backslash \{c\},$$
then $c\in W_{A'}$.
\end{definition}



The second property we introduce is called monotonic robustness\footnote{Monotonic-robustness implies $1$-robustness in the sense of Bredereck et al.~\cite{Bred17}.} This property captures the idea that if only a single candidate $c\in C$ position is improved then the only possible change to the winning committee is that $c$ becomes elected. If $c$ was already elected then monotonic robustness implies candidate monotonicity.

\begin{definition}\label{mon2}\emph{[Monotonically-robust]}\\
Let $(N, C, A, k)$ be a voting instance with outcome $W_A$, let $c \in C$ and $A'$ be a reinforcement of $c$. A voting rule is said to be \emph{monotonically robust} if whenever
$$W_A\neq W_{A'} \implies W_{A'}=\{c\}\cup \Big(W_A\backslash \{c'\}\Big) \qquad \text{for some $c'\in W_A$}.$$
\end{definition}




\begin{proposition}
All non-degenerate best-$k$ voting rules satisfy relative rank monotonicity and monotonic robustness.
\end{proposition}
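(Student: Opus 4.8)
The plan is to prove the two monotonicity properties separately, in each case reducing the committee-membership claims to the scoring-function behaviour guaranteed by non-degeneracy. The key structural fact I would exploit is that a non-degenerate best-$k$ rule ranks candidates by $f(c,A)=g(c,s(c,A))$ where $g$ is strictly increasing in its second argument, so the \emph{relative} ranking of any two candidates is governed entirely by how their approval scores move, together with a fixed deterministic tie-breaker. Throughout I would let $W_A$ denote the set of the top-$k$ candidates under the score order induced by $f(\cdot,A)$.

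First I would handle relative rank monotonicity. Take $c\in W_A$ and a profile $A'$ with $s(c,A')\ge s(c,A)$ while $s(c',A')\le s(c',A)$ for every $c'\neq c$. Since $g(c,\cdot)$ is strictly increasing, $f(c,A')\ge f(c,A)$, and for each other candidate $g(c',\cdot)$ strictly increasing gives $f(c',A')\le f(c',A)$. The heart of the argument is that $c$'s position in the score order can only improve: for any fixed $c'\neq c$, if $c$ was ranked strictly above $c'$ under $A$ (or tied and winning the tie-break), then after the score changes $c$ still has weakly larger score relative to $c'$, so $c$ remains ranked weakly above $c'$. I would argue that the number of candidates $c'$ with $f(c',A')>f(c',A')$-type strict dominance over $c$ cannot exceed the number that dominated $c$ under $A$; since $c$ was among the top $k$ under $A$, it remains among the top $k$ under $A'$, i.e. $c\in W_{A'}$. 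The mild subtlety here is the tie-breaking: I would fix the convention that ties are broken by a candidate-dependent deterministic order and note that because the tie-breaker does not change between $A$ and $A'$, any candidate $c$ strictly prefers to the tie-breaker stays preferred, so no previously-below candidate can leapfrog $c$.

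Next I would prove monotonic robustness. Let $A'$ be a reinforcement of a single candidate $c$, so $A'_i=A_i\cup\{c\}$ for one voter $i$ and all other ballots are unchanged. Then $s(c,A')=s(c,A)+1$ and $s(c'',A')=s(c'',A)$ for all $c''\neq c$. Hence $f(c'',A')=f(c'',A)$ for every $c''\neq c$, while $f(c,A')>f(c,A)$ by strict monotonicity of $g(c,\cdot)$. The relative order among all candidates \emph{other than} $c$ is therefore completely unchanged, and $c$'s score only rises. I would then argue by cases: if $c\in W_A$, then by the relative rank monotonicity already established (with this $A'$ satisfying its hypotheses) $c\in W_{A'}$, and since the order among the remaining candidates is unchanged the top-$k$ set is identical, giving $W_{A'}=W_A$. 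If $c\notin W_A$, then $c$ can only move up in the order, so either it stays below the cutoff (giving $W_{A'}=W_A$) or it crosses exactly into the top $k$; in the latter case, because only $c$'s score changed and every other candidate's relative order is frozen, the candidate it displaces is precisely the one previously occupying the $k$-th position, yielding $W_{A'}=\{c\}\cup(W_A\setminus\{c'\})$ for that unique $c'\in W_A$.

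The main obstacle I anticipate is handling ties cleanly in both parts: the statements are about membership in $W_A$, but $W_A$ is only well-defined once a deterministic tie-break is fixed, and the definitions speak of a weak order. The delicate point is verifying that when $c$'s score increases to exactly tie another candidate, the fixed tie-breaking rule cannot cause a \emph{second} swap or an unexpected displacement that violates the ``single change'' conclusion of monotonic robustness. I would address this by arguing that since the tie-break order is independent of the profile and only $c$'s score moved, the induced weak order changes only in $c$'s rank relative to others, so at most one candidate can be pushed out of the top $k$; the remaining bookkeeping is routine. The relative-rank argument is the substantive step, and monotonic robustness then follows largely as a corollary of it once the single-score-change observation is in place.
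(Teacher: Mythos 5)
Your proposal is correct and follows essentially the same route as the paper: both reduce each property to the behaviour of $f(c,A)=g(c,s(c,A))$ under strict monotonicity of $g$, observing for relative rank monotonicity that $c$'s score weakly rises while all others weakly fall, and for monotonic robustness that only $c$'s score changes so at most the lowest-ranked member of $W_A$ can be displaced. Your treatment is somewhat more explicit about tie-breaking (and note the typo ``$f(c',A')>f(c',A')$'', which should compare against $f(c,A')$), but this is a refinement of, not a departure from, the paper's argument.
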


\begin{proof}
Given a non-degenerate best-$k$ voting rule with (scoring) function $f$ we shall show that both properties are satisfied. 

We begin with relative rank monotonicity: Consider voting instance with outcome $W_A$ and $c\in W_A$ and let $A'$ be a ballot profile such that
$$s(c,A')\ge s(c,A) \quad \text{ and }\quad s(c', A')\le s(c', A)\quad \forall c'\in C\backslash \{c\}.$$
Since $c\in W_A$ it follows that $f(c, A)$ is among the top-$k$ scores (when taking the tie-breaking rule into account). Now consider $f(c,A')=g(c, s(c,A'))$, since $g$ is strictly increasing in the second argument it follows that 
$$f(c, A')\ge f(c,A).$$
Furthermore, for any $c'\in C\backslash \{c\}$ it follows that $f(c', A')\le f(c', A)$.Thus, if $c$ was among the top-$k$ scores under $A$ it must also be among the top-$k$ scores under $A'$ -- we conclude that $c\in W_{A'}$. 

We now prove that monotonic robustness holds: Let $c\in C$ and let $A'$ be a reinforcement of $c$ relative to ballot profile $A$ -- this means that $s(c, A')=s(c, A)+1$ and $s(c', A')=s(c', A)$ for all other $c'\in C$. Thus, $f(c', A)=f(c',A')$ for all $c'\neq c$ and $f(c, A')> f(c, A)$. It follows that if $W_A\neq W_{A'}$ it must be that $f(c, A)$ was not among the top-$k$ scores under $A$, but $f(c, A')$ is among the top-$k$ score sunder $A'$. Furthermore, this is the only change and so the single candidate $c'\in C$ with the $k$-th best score under $A$ is replaced in $W_{A'}$ with the reinforced candidate $c$. This completes the proof.
\end{proof}

Some of our work applies to voting rules which only satisfy rank relative monotonicity but not necessarily monotonic robustness -- any such voting is necessarily outside of the family of non-degenerate best-$k$ rules. Our proofs are always proven with explicit reference to which properties are required, thus it will be clear when monotonic robustness is not required.


\section{Ballot-length restrictions and minimal best response approval ballots}


This section considers the impact of ballot-length restrictions on voter behaviour - in particular, whether and when such a restriction will prevent a voter from submitting their otherwise `optimal' ballot.

Ballot-length restrictions are common in practice however, it is unclear for what purpose (Laslier and Van der Straeten~\cite{Las16}). A potential argument is that ballot-length restrictions simply reduce the number of votes without affecting the election outcome. Indeed, as seen in the previous section a voter may be indifferent between all ballot entries and so a ballot-length restriction may assist in reducing the complications around vote counting whilst not affecting a voters best-response action and the election outcome. 

We show that ballot-length restrictions can in some instances prevent a voter from submitting any ballot which would be a best-response in the absence of such restrictions. In fact whether, or not, a voter is affected directly by the restriction can be determined via the value of their $j^*$ set-extension. 

Interestingly we find that when the ballot-length restriction equals the number of candidates to be elected no voter will ever be prevented from submitting a ballot which is also a BR ballot in the absence of restrictions.  This coincides with ballot-length restrictions observed in real-world elections such as plurality voting or the regional Zurich election (mentioned in the introduction) whereby seven winning candidates are to be elected and voters can approve of at most seven candidates.

We begin by formally defining a best response (BR) and minimal best response (MBR) approval ballot and a ballot-length restriction.

\begin{definition}\emph{[Best response (BR) and minimal best response (MBR) ballots]}\\
Given a voter $i\in N$, an approval ballot $A_i$ is a best-response (BR) ballot to $A_{-i}$ if 
$$U_i(W_{(A_i, A_{-i})})\ge U_i(W_{(A_i', A_{-i})})\qquad \forall A_i'\in 2^C,$$
where $2^C$ is the power set of $C$. We denote the set of best-response ballots to $A_{-i}$ for voter $i$ by $B_i(A_{-i})$. A ballot $A_i\in B_i(A_{-i})$ is minimal best response (MBR) ballot if 
$$|A_i|\le |A_i'| \quad \forall A_i'\in B_i(A_{-i}).$$
\end{definition}


\begin{remark}
The size of a minimal best-response ballot is unique. However, minimal best-response ballots need not be unique even under the strict preference assumption over $C$. With the additional assumption of `full-rank' set-extensions MBR ballots become unique (this additional assumption is introduced and discussed in section 5). 
\end{remark}

We now define an $R$ ballot-length restriction which is a feature of voting rule which limits the size of valid voter approval ballot i.e. $|A_i|\le R$. We also define an $R$ ballot-length restriction to be constraining if there exists instances where a voter is prevented from submitting an approval ballot $A_i$ which would be a best response if no restrictions was enforced; that is $|A_i|>R$.

\begin{definition}\label{constraining}\emph{[$R$ ballot-length restriction]}\\
A voting rule which only permits approval ballots $A_i$ such that $|A_i|\le R$ for some positive integer $R$, is referred to as an $R$ ballot-length restriction. If $R\ge |C|=m$ we say the voting rule is unrestricted.

Given a voter $i\in N$, we say that an $R$ ballot-length restriction is \emph{constraining} for voter $i$ if there exists an instance $A_{-i}$ such that for every BR ballot (in the unrestricted setting) $A_i$ we have 
$$|A_i|>R,$$
and hence $A_i$ can not be submitted.
\end{definition}

Note that in the definition above for a constraining $R$ ballot-length restriction we could equivalently write the condition in terms of the size of a MBR ballot $A_i$ since this implies the inequality for all BR ballots.

The following key lemma provides a characterisation of the size of a voters minimal best response ballot based on their $j^*$ set-extension utility in the presence of a ballot-length restriction. The lemma states that a voter who derives utility from some function of their top $j^*$ most preferred candidates in a committee can achieve maximal utility, under unilateral deviations, by submitting a ballot of size no larger than $j^*$ in the absence of a ballot-length restriction. However, if an $R$ ballot length restriction is enforced such that $R<j^*$ then there are instances where the voter will be strictly worse off.

 The intuition for the result is that if a voter's best response was to submit a ballot of size greater than $j^*$ but only gains utility from $j^*$ candidates in the committee, say $C'$, then any approval/vote for a candidate not in $C'$ could be removed without reducing the voters utility.

\begin{lemma}\label{lem2}\emph{[The `how long is a piece of string?' Lemma]}\\
Let $i\in N$ be a voter with $j^*$ set-extension. In the absence of a ballot-length restriction voter $i$'s MBR $A_i$ is such that
$$|A_i|\le j^*.$$
If an $R$ ballot-length restriction with $R<j^*$ is enforced then the restriction is constraining -- in particular, there exists voting instances such that voter $i$ will be strictly worse off under unilateral deviations. 
\end{lemma}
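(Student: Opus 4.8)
The plan is to establish the two assertions separately. For the bound $|A_i|\le j^*$ I would argue by contradiction from minimality. Suppose $A_i$ is an MBR to $A_{-i}$ with $|A_i|>j^*$, write $W=W_{(A_i,A_{-i})}$ for the induced committee, and let $C'=\{c_1,\dots,c_{j^*}\}$ be voter $i$'s top $j^*$ candidates of $W$ with respect to $\spref_i$; by \defref{j-set-util} the value $U_i(W)$ depends only on $C'$. Since $|A_i|>j^*=|C'|$, the ballot cannot be contained in $C'$, so there is a candidate $c_0\in A_i\setminus C'$. I would then show that deleting $c_0$ yields another best response of strictly smaller size, contradicting minimality.

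For this deletion step set $A_i'=A_i\setminus\{c_0\}$, which lowers only $c_0$'s score and leaves every other score unchanged. Applying relative rank monotonicity (\defref{mon3}) to each $d\in W\setminus\{c_0\}$ shows $W\setminus\{c_0\}\subseteq W_{A_i'}$, so either $W_{A_i'}=W$ (when $c_0\notin W$) or $W_{A_i'}=(W\setminus\{c_0\})\cup\{c'\}$ for a single $c'\notin W$ (when $c_0\in W$). In the first case the utility is unchanged. In the second, $c_0$ lies strictly below position $j^*$ in $W$, hence contributes nothing to $U_i(W)$: if $c'\prec_i c_{j^*}$ the top $j^*$ of $W_{A_i'}$ is still exactly $C'$ and the utility is unchanged, whereas if $c'\succ_i c_{j^*}$ then $c'\succ_i c_0$ and the move swaps $c_0$ out for the strictly more preferred $c'$, so \lemref{weakmonoutil} gives $U_i(W_{A_i'})\ge U_i(W)$. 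Either way $U_i(W_{A_i'})\ge U_i(W)$, and since $A_i$ is a best response this is an equality, so $A_i'$ is a strictly smaller best response, the desired contradiction.

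For the second assertion I would exhibit a single constraining instance. Fix $j^*$ disjoint ``contested'' pairs $(g_\ell,b_\ell)$ together with $k-j^*$ ``filler'' candidates, and set voter $i$'s preferences so that all goods are strictly above all bads, which are strictly above all fillers. Using auxiliary voters and the strict monotonicity of the score $g(c,\cdot)$ guaranteed by non-degeneracy, I would calibrate $A_{-i}$ so that every filler always lands in the top $k$, and so that within each pair the bad $b_\ell$ is elected over $g_\ell$ exactly when voter $i$ does \emph{not} approve $g_\ell$ (the tie broken in favour of $b_\ell$ by taking $b_\ell$ to be the higher tie-break-priority candidate of the pair). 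Under this calibration, a ballot approving $a_g$ of the goods produces a committee containing precisely those $a_g$ goods and $j^*-a_g$ bads among the contested slots, so the number of goods elected equals $a_g\le|A_i|$.

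Finally I would compare utilities. The globally optimal committee $W^*$ contains all $j^*$ goods (plus the fillers), is attainable only by approving all $j^*$ goods, and yields $U^*=\sum_{\ell=1}^{j^*}\lambda_\ell^{(i)}u_i(g_\ell)$; this requires a ballot of size $j^*>R$. For any ballot with $|A_i|\le R<j^*$ the elected goods number $a_g\le R$, so the sorted top-$j^*$ utility vector of the resulting committee is dominated component-wise by that of $W^*$, with a strict loss at position $j^*$ (where a bad replaces $g_{j^*}$). Since $\lambda_{j^*}^{(i)}>0$ and all weights are nonnegative, this gives strictly less than $U^*$; hence no ballot of length at most $R$ is a best response, every unrestricted best response has length $>R$, and voter $i$ is strictly worse off, so the restriction is constraining. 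I expect the main obstacle to be the calibration step, namely realising ``$g_\ell$ is elected if and only if $i$ approves it'' for an arbitrary non-degenerate scoring rule (where $g(c,\cdot)$ may scale differently across candidates) rather than only for the AV-rule; I would resolve this by exploiting the freedom to add auxiliary voters and to assign the pair roles according to the rule's fixed tie-breaking order.
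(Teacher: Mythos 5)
Your proof of the first claim is correct and takes essentially the paper's route: a minimality contradiction driven by relative rank monotonicity (\defref{mon3}). The only difference is cosmetic: you delete a single superfluous candidate $c_0\in A_i\setminus C'$ and handle the possible one-for-one swap via \lemref{weakmonoutil}, whereas the paper deletes all of $A_i\setminus\{c_1,\ldots,c_{j^*}\}$ in one step and observes that $\{c_1,\ldots,c_{j^*}\}$ survives the deletion. (A trivial omission on your side: when $c_0\in W$ the outcome can also remain equal to $W$, but that case is harmless since the utility is then unchanged.)

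The second claim is where your proposal has a genuine gap, and it is exactly the step you flag: the calibration cannot be carried out for a general non-degenerate best-$k$ rule, and neither of your proposed fixes repairs it. Non-degeneracy pins down cross-candidate score comparisons only at zero approvals, where $g(c,0)=0$ for every $c$; at positive approval levels the functions $g(c,\cdot)$ and $g(c',\cdot)$ are incomparable. Concretely, take $g(g_\ell,s)=s/(s+1)$ and $g(b_\ell,s)=s$: a bad with a single auxiliary approval then beats its good regardless of how many voters approve the good, so the pair $(g_\ell,b_\ell)$ can be ``contested'' only if both receive zero approvals under $A_{-i}$ and the tie-break favours $b_\ell$ --- at which point your pair-and-filler apparatus has collapsed into the paper's much simpler construction. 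The paper avoids calibration altogether: it takes $A_{-i}$ to consist of empty ballots, so every score equals $g(c,0)=0$ and voter $i$ fully controls the outcome, and invokes a tie-break order under which none of voter $i$'s top $j^*$ candidates $C'$ is elected by default; voter $i$'s MBR is then $C'$ itself (size $j^*$), while under an $R<j^*$ restriction at most $R$ members of $C'$ can ever be elected, giving a strict utility loss at position $j^*$ because $\lambda_{j^*}^{(i)}>0$. A secondary issue: \defref{constraining} quantifies only over ballot profiles $A_{-i}$ of the fixed voter set $N$, so the ``freedom to add auxiliary voters'' you invoke is not clearly available, whereas the all-empty profile exists for every $N$.
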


\begin{proof}
Let $A_i$ be a BR ballot to $A_{-i}$ with outcome $W$. Label the elements of $W$ according to voter $i$'s (strict) preferences i.e.
$$W=\{c_1, \ldots, c_k\}$$
such that $c_j \spref_i c_{j+1}$ for all $j<k$.

From the $j^*$ set-extension utility we have
\begin{align*}
U_i(W)&=\sum_{j=1}^{k} \lambda_j^{(i)}u_i(c_j)=\sum_{j=1}^{j^*} \lambda_j^{(i)}u_i(c_j),
\end{align*}
and so voter $i$ attains utility from only candidates $\{c_1, \ldots, c_{j*}\} \subseteq W$. Notice that since $W$ is the outcome under a BR ballot $U_i(W)$ is the maximum achievable utility for voter $i$ when facing ballot profile $A_{-i}$.

To prove the first claim it suffices to show that a BR ballot exists of size at most $j^*$. For the purpose of a contradiction suppose that $A_i$ is a MBR ballot of size greater than $j^*$, we shall construct an alternate approval ballot $A_i'$ which is of size $j^*$. Define $A_i'$ as follows
$$A_i'=A_i \backslash (A_i\backslash \{c_1, \ldots, c_{j^*}\}),$$
note that any candidate in $A_i\backslash \{c_1, \ldots, c_{j^*}\}$ is removed from $A_i$ to form $A_i'$. Let $W'$ be the corresponding outcome from $A'=(A_i', A_{-i})$. In this election, the approval scores of candidates in $A_i\backslash \{c_1, \ldots, c_{j^*}\}$ strictly decrease, whilst all other candidates' approval scores are unchanged. Thus, by rank-relative monotonicity it must be the case that $\{c_1, \ldots, c_{j^*}\}\subseteq W'$. It follows that 
\begin{align}\label{pieceofstringlemeq1}
U_i(W')\ge U_i(W)
\end{align}
since voter $i$'s top $j^*$ most preferred candidates from $W$ i.e. $\{c_1, \ldots, c_{j^*}\}$, are still contained in $W'$. However, since $W$ provides maximal utility it must be that equality holds in (\ref{pieceofstringlemeq1}), thus $A_i'$ is also a BR ballot. Furthermore,
$$|A_i'|\le j^*.$$
This is a contradiction since $A_i$ was assumed to be a MBR ballot, yet $A_i'$ achieves the same utility but is of strictly smaller size. 

For the second claim, consider an $R$ ballot-length restriction with $R<j^*$. Let $C'\subseteq C$ be voter $i$'s top $j^*$ preferred candidates in $C$. Suppose that voter $i$ faces a ballot profile $A_{-i}$ such that for all $c\in C$ 
$$f(c, A_{-i})=0.$$
That is all voters in $N\backslash \{i\}$ submit an empty approval ballot and so voter $i$ can completely determine the election outcome. If the voting rule uses a tie-breaking method which leads to $W\cap C' =\emptyset$ then voter $i$'s MBR ballot is 
$$A_i=C',$$
of size $j^*$. However, this is not a valid ballot since an $R$ ballot-length restriction ($R<j^*$) is enforced. It follows that voter $i$'s BR among valid ballots leads to strictly less utility than if the ballot-length restriction was not enforced.
\end{proof}


We now present a series of corollaries which follow immediately from \Cref{lem2} and \Cref{constraining}. These corollaries have direct policy implications regarding the assumptions required on voter utilities to justify a ballot-length restriction as being non-constraining. With the exception of \Cref{ss} the assumptions are strong and unlikely to hold in reality.

The first corollary states that if voters can vote for no more than the number of candidates to be elected (i.e $R=k$) then the ballot-length restriction is non-constraining. As mentioned previously, this coincides with real-world approval elections observed in practice. The intuition is as follows: if committee of size $k$ is to be elected then each voter's utility depends on some function of these $k$ candidates. If a voter had a best response ballot of size greater than $k$, under complete information, at least one of these approvals/votes is unnecessary since only $k$ candidates are being elected. 

\begin{corollary}\label{ss}
If $R=k$ then the ballot-length restriction is non-constraining. In particular, plurality voting (single-winner) is non-constraining.
\end{corollary}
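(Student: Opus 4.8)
The plan is to derive this directly from \Cref{lem2} together with the default bound on the set-extension value that is built into \Cref{j-set-util}. The first observation I would make is that every voter $i \in N$ satisfies $j^*(i) \le k$: either $j^*(i)$ is the smallest index beyond which all OWA coefficients vanish (necessarily at most $k$, since a voter's utility involves only the coefficients $\lambda_1^{(i)}, \ldots, \lambda_k^{(i)}$), or no such index exists and $j^*(i) = k$ by the stated convention. In either case the bound $j^*(i) \le k$ is a structural fact about the voter's utility that holds regardless of what the other voters do.

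Next I would invoke \Cref{lem2}, which guarantees that in the absence of a ballot-length restriction voter $i$ has an MBR ballot $A_i$ with $|A_i| \le j^*(i)$. Chaining the two inequalities yields $|A_i| \le j^*(i) \le k = R$ for every voter $i$ and every opposing profile $A_{-i}$. Hence, for any instance, the voter always possesses a best-response ballot whose size does not exceed $R$, so that ballot remains valid once the restriction $|A_i| \le R$ is imposed.

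Finally, I would close the argument by appealing to the remark immediately following \Cref{constraining}: since it suffices to check the size of an MBR ballot, and we have just shown that an MBR ballot of size at most $R$ always exists, there is no instance $A_{-i}$ for which every BR ballot exceeds the length bound. Thus the $R = k$ restriction fails to be constraining for any voter. For plurality voting the single-winner setting fixes $k = 1$ while the `one-vote-per-person' rule imposes $R = 1 = k$, so plurality is precisely the special case $R = k$ and the same conclusion applies.

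I do not expect a genuine obstacle here, as the statement is an immediate corollary of the preceding lemma; the only point requiring care is the order of quantifiers. The inequality $|A_i| \le R$ must hold simultaneously for \emph{every} voter and \emph{every} profile $A_{-i}$, and this is exactly what the bound $j^*(i) \le k$ delivers, being independent of the opposing ballots. I would therefore keep the exposition short and emphasize that the result is driven entirely by the fact that a committee of size $k$ can give a voter utility from at most $k$ positions.
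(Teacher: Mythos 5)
Your proof is correct and follows essentially the same route the paper intends: the corollary is stated as an immediate consequence of \Cref{lem2} and \Cref{constraining}, obtained by chaining $|A_i|\le j^*(i)\le k=R$, exactly as you do, with the observation $j^*(i)\le k$ being forced by \Cref{j-set-util} since the OWA utility has only $k$ coefficients. Your explicit attention to the quantifier order (the bound holding for every voter and every $A_{-i}$) is a worthwhile clarification of what the paper leaves implicit, but it is not a different argument.
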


The second corollary states that if the common restriction of `one-vote-per-person' is applied to a multi-winner election then this is only non-constraining to voters in one situation whereby voters only derive utility from their most preferred candidate elected into the committee -- this is indeed a strong assumption.

\begin{corollary}
Plurality voting in a multi-winner setting is non-constraining if and only if all voters have $j^*=1$ set-extension utility. 
\end{corollary}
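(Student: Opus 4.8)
The plan is to recognise that plurality voting in a multi-winner setting is exactly an $R$ ballot-length restriction with $R=1$ (the ``one-vote-per-person'' rule), and then to read off both directions of the biconditional directly from \lemref{lem2}. Throughout, ``non-constraining'' is taken to mean non-constraining for \emph{every} voter in $N$, in line with \defref{constraining}. With this reading, the corollary becomes a clean specialisation of the `how long is a piece of string?' Lemma to the extreme case $R=1$.

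First I would establish the ``if'' direction. Suppose every voter $i\in N$ has a $j^*=1$ set-extension utility. Fix an arbitrary voter $i$ and an arbitrary opposing profile $A_{-i}$. The first claim of \lemref{lem2} guarantees a minimal best-response ballot $A_i$ with $|A_i|\le j^*=1=R$. Hence, no matter what the other voters do, voter $i$ has a best-response ballot that satisfies the restriction, so by \defref{constraining} the restriction is not constraining for $i$. Since $i$ was arbitrary, the restriction is non-constraining, giving one implication.

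For the converse I would argue by contraposition: assume it is not the case that all voters have $j^*=1$, so some voter $i$ has $j^*(i)\ge 2$. Because $R=1$, we then have $R=1<2\le j^*(i)$, i.e. $R<j^*(i)$, which is precisely the hypothesis of the second claim of \lemref{lem2}. That claim supplies a profile $A_{-i}$ (concretely, the all-empty profile together with an adversarial tie-break) on which every valid ballot of length at most $R$ yields strictly less utility than voter $i$'s unrestricted best response. Thus the restriction is constraining for $i$, hence constraining overall, completing the contrapositive.

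I expect no genuine technical obstacle here, since the argument is a direct invocation of \lemref{lem2} at $R=1$. The only point demanding care is the quantifier bookkeeping around the word ``non-constraining'': it must be interpreted as holding for all voters, and one must observe that the presence of even a single voter with $j^*\ge 2$ is enough to render the whole restriction constraining, which is exactly what pairs with the ``all voters'' phrasing of the biconditional.
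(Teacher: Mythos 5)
Your proposal is correct and takes essentially the same approach the paper intends: the paper presents this corollary as following immediately from \lemref{lem2} and \defref{constraining}, namely by identifying multi-winner plurality with an $R=1$ ballot-length restriction and reading the two directions of the biconditional off the two claims of the lemma (the first claim giving non-constraining when every $j^*(i)=1$, the second giving a constraining instance for any voter with $j^*(i)\ge 2$). Your attention to the ``for every voter'' quantifier in the definition of non-constraining matches the paper's intended reading.
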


The third corollary states that a ballot-length restriction is only non-constraining for voters who derive utility from a relatively small number of candidates in the committee (i.e. low $j^*(i)$ set-extension value). This suggests that voters with relatively narrow utilities with respect to the winning committee; for example only attaining utility from their top few most preferred candidates in $W$, are less likely to be affected by a ballot-length restrictions. It follows that ballot-length restrictions will disadvantage voters who derive utility holistically from the winning committee.

\begin{corollary}
Consider an $R$ ballot-length restriction, this is non-constraining when every voter has a $j^*(i)$ set-extension utilities such that $j^*(i)\le R$ for all $i\in N$. If this is not the case, it will be constraining for the subset of voters with $j^*(i)> R$.
\end{corollary}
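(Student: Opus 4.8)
The plan is to derive this final corollary as a direct consequence of \Cref{lem2} together with \defref{constraining}, splitting the argument into the two implications that the statement asserts. The corollary makes two claims: that the restriction is non-constraining when $j^*(i) \le R$ for every voter, and that it is constraining exactly for those voters with $j^*(i) > R$. I would handle each claim separately, in each case reducing to the appropriate half of \Cref{lem2}.

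For the first claim, I would fix an arbitrary voter $i \in N$ with $j^*(i) \le R$ and invoke the first part of \Cref{lem2}, which guarantees that in the unrestricted setting voter $i$ has an MBR ballot $A_i$ with $|A_i| \le j^*(i)$. Since $j^*(i) \le R$ by hypothesis, this yields $|A_i| \le R$, so this MBR ballot is valid under the $R$ restriction. I would then point out that because the MBR ballot attains maximal utility among all ballots in $2^C$, voter $i$ loses nothing by being confined to ballots of size at most $R$. As this holds for every voter simultaneously, no voter is ever prevented from submitting a best response, which is precisely the definition of the restriction being non-constraining in \defref{constraining}.

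For the second claim, I would fix a voter $i$ with $j^*(i) > R$, so that $R < j^*(i)$, and apply the second part of \Cref{lem2}. That part exhibits a voting instance --- the one where every other voter submits an empty ballot and the tie-breaking rule excludes voter $i$'s top $j^*(i)$ candidates --- in which voter $i$'s only best response requires a ballot of size $j^*(i) > R$, and hence cannot be submitted, leaving the voter strictly worse off. This is exactly the witness $A_{-i}$ demanded by the constraining condition in \defref{constraining}, so the restriction is constraining for every such voter $i$.

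I do not expect any genuine obstacle here, since both directions are essentially restatements of the two halves of \Cref{lem2} under the substitution $R < j^*(i)$; the corollary is a per-voter sharpening of the dichotomy already proven. The only point requiring mild care is the logical structure of the ``if and only if''-flavoured phrasing: the first sentence quantifies the non-constraining conclusion over \emph{all} voters (requiring $j^*(i) \le R$ for every $i$), whereas the second sentence localises the constraining conclusion to the \emph{subset} of voters violating this bound. I would therefore be explicit that the universal statement follows because each individual voter is unaffected, while the existential failure witnessing ``constraining'' need only be established voter-by-voter, which is exactly what the second part of \Cref{lem2} provides.
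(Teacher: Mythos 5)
Your proposal is correct and follows exactly the route the paper intends: the paper offers no separate proof, stating only that this corollary ``follows immediately'' from \Cref{lem2} and \Cref{constraining}, and your two-part argument (the first half of \Cref{lem2} giving an MBR ballot of size at most $j^*(i)\le R$ for every voter, and the second half supplying the witness instance $A_{-i}$ for each voter with $j^*(i)>R$) is precisely that immediate derivation, spelled out per voter. The only nitpick is the phrase ``voter $i$'s only best response'': the BR ballot in the witness instance need not be unique, but since every BR ballot must contain the voter's top $j^*(i)$ candidates, all of them exceed length $R$, which is what the definition requires.
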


We conclude that ballot-length restrictions of size $R<k$ will affect different voters to different extents, and because of this inequality, it is unclear whether ballot-length restrictions are ever fully justified without strong assumptions upon voter $j^*(i)$ set-extension values.


\section{Equilibrium analysis: costly voting and ballot-length restrictions}


This section presents analysis for two behavioural tendencies of voters; laziness and truth-bias. Lazy voting refers to when voters would rather abstain then to vote when their vote makes no difference to the election outcome, whilst truth-bias voting has voters who prefer to choose sincere actions (to be defined) among their set of strategically optimal actions (so long as this is still a best-response). Both of these tendencies have been presented previously in the literature\footnote{See for example Desmedt and Elkind~\cite{DyEe10}, Dutta and Laslier~\cite{DbLjf10}, Elkind et al.~\cite{Elk15}, Endriss~\cite{End13}, and Xia and Conitzer~\cite{XiCv10}.} however, we extend them to the multi-winner setting and provide equilibrium analysis in this new setting. In the multi-winner setting lazy voting is characterised by voters submitting minimal best-response ballots - since this is the minimal amount of `effort', or votes, required to attain maximal utility. 

The purpose of our focus on lazy and truth-bias voting is to capture the impact of institutional features which increase the cost of voting or enforce ballot-length restrictions. Lazy voting tendencies can be modelled via positive but sufficiently small cost to voting as in Xia and Conitzer~\cite{XiCv10}, in the extreme case a ballot-length restriction may force a voter to submit a minimal best-response ballot by making any other best-response ballot invalid. In the less extreme case, it may simply encourage lazy voting. The opposite tendency is truth-bias which is more likely to occur when voting is costless and ballot-length restrictions are absent. If ballot-length restrictions are present a sincere and best-response ballot may not be valid.

Our key insight shows that when voters have diverse preferences a lazy equilibrium is unlikely to exist whilst under truth-bias voting an equilibrium always exists. This suggest a sense of instability when voters are lazy which can be inferred as an indirect and adverse effect of increases to the cost of voting or the introduction of ballot-length restrictions. However, and perhaps surprisingly, when lazy equilibria do exists they are guaranteed to satisfy a desirable property which is not guaranteed under a truth-bias equilibrium. 





As mentioned we consider voters with behavioural tendencies, or secondary preferences. This modelling choice is in part motivated by the observation of several scholars that (Nash) equilibrium analysis of strategic voting often shows a plethora of equilibria - many which seem `unreasonable' and are not useful as predictors of behaviour. One method for refining these equilibria is to add a secondary preference, or tendency, of voters such that when there are multiple optimal actions at the voter's disposal they select an action based on this secondary preference. This additional feature also addresses the key difficulty highlighted by Cox~\cite{Cox87} when analysing equilibria of approval voting systems as summarised in the following quote ``voters have such a wide choice of ballots, it is not clear, in general how to forecast their votes simply on the basis of their preference ranking of candidates". 

The following example illustrates what seem to be `unreasonable' pure-Nash equilibria (PNE).

\begin{example}
Let $C=\{a, b, c\}$, $N=\{1, 2, 3\}$, $k=1$ and suppose all voters have the following strict preferences \begin{align*}
\spref_1, \spref_2, \spref_3:& \qquad b,\, a,\, c.
\end{align*}
That is, all voters prefer candidate $b$ to $a$ and prefer candidate $a$ to $c$. If the candidate with the highest number of approvals is elected (i.e the standard `AV-rule') then a PNE of $W=\{c\}$ can be supported by all voters submitting the approval ballot $A_i=\{c\}$. This is a PNE despite all voters having identical preferences and $c$ being the least preferred candidate. Similar approvals can be constructed to show that any other candidate in $C$ can also be elected in equilibrium.
\end{example}

All results that follow\footnote{One exception is \Cref{prop2} which holds more generally for non-degenerate best-$k$ rules and is stated with this more general condition.} are for the standard `AV-rule', whereby the $k$ candidates with highest approval scores are elected and ties are broken via a lexicographic priority ordering (predetermined and deterministic). We shall denote the lexicographic ordering/relation by $\triangleright$. This specific rule belongs to the family of non-degenerate best-$k$ scoring rules and hence the results from previous sections hold. 

\begin{assumption}
The results in this section focus on the standard `AV-rule', whereby the $k$ candidates with highest approval scores are elected and ties and broken via a lexicographic priority ordering. This rule belongs to the family of non-degenerate best-$k$ scoring rules. This assumption will be assumed without being explicitly stated in the remaining results.
\end{assumption}


\subsection{Lazy voting and lazy Nash} 


In this subsection we focus on `lazy' voting whereby a voter wishes to abstain from voting when her vote can never make a profitable change to the election outcome. By extension, when it is profitable to vote for multiple candidates lazy voting requires that voters choose a utility maximising ballot of the smallest cardinality, or length, as possible. That is, a lazy voter always chooses a minimal best response (MBR) ballot. 

We begin by formally defining lazy voting and lazy equilibria. We then turn to analysis of the existence and properties of lazy-equilibria in the multi-winner setting. We provide a dichotomous characterisation of lazy-equilibria in relation to a set of `most preferred candidates' which depends of voter $j^(i)$ set-extension values. This highlights a desirable property which is guaranteed by lazy-equilibria. This result is followed by a necessary and sufficient condition for lazy-equilibria in one of the characterisation and a necessary condition in the other characterisation. When translating these results to the single-winner setting we attain a result of Elkind et al.~\cite{Elk15} as a corollary.


\begin{definition}\emph{[Lazy voting]}\\
We shall say voters engage in \emph{lazy voting} when given $A_{-i}$ they choose an approval ballot $A_i$ which is a minimal best response (MBR) ballot to $A_{-i}$. That is, given two best-response ballots $A_i$ and $A_i'$, $A_i$ is strictly preferred to $A_i$ if 
$$|A_i|<|A_i'|.$$
\end{definition}

Note that the lazy voting action only applies to best-response ballots and hence can be viewed as a second-order tendency since the preference for shorter approval ballots is only applied after first maximising utility from the election outcome.

\begin{definition}\emph{[Lazy-PNE]}\\
An election outcome $W$ is a lazy-PNE supported by ballot profile $A$ if no voter has an incentive to unilaterally deviate under lazy voter preferences. We denote an equilibrium pair by $(W, A)$.
\end{definition}

It follows that a lazy-PNE $(W,A)$ occurs if and only if for every voter $i\in N$, $A_i$ is a MBR ballot to $A_{-i}$.


Recall that the number of approvals a candidate $c\in C$ attains under ballot profile $A$ is denoted by $s(c, A)$. Also recall that under the standard `AV-rule' in a $k$ multi-winner election the candidates with $k$ highest approval scores $s(c,A)$ are elected (applying lexicographic tie-breaking when necessary).

The following proposition provides a useful observation of lazy-equilibrium approval scores. The observation is that at most $k$ candidates receive precisely one approval and all other candidates receive zero in equilibrium. The result is intuitive since voters are lazy and if $k$ candidates are to be elected -- any vote for an unelected candidate can not be a minimal best response. Furthermore, if any of the $k$ elected candidates, say $c$, has more than one approval votes -- since unelected candidates have zero votes -- a lazy voter with a vote for $c$ could remove this and not change the election outcome under unilateral deviations.

\begin{proposition}\label{lazyprem}
Under a lazy-PNE, say $(W, A)$,
$$s(c, A)=0\quad \forall c\notin W.$$
 That is, at most $k$ candidates can receive a positive share of approval votes. Furthermore,
 $$s(c, A)\le 1\quad \forall c\in W.$$
\end{proposition}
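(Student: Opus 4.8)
The plan is to argue by contradiction in both parts, exploiting the defining feature of a lazy-PNE: every voter's ballot is a minimal best response (MBR), so any approval that could be deleted without changing the election outcome cannot be present. The key technical device in each part is to show that removing a single approval from one voter leaves the winning committee $W$ unchanged, which I would obtain from the monotonicity properties of the previous section together with the scoring structure of the AV-rule.

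For the first claim, suppose some $c\notin W$ has $s(c,A)>0$, so that some voter $i$ has $c\in A_i$. I would set $A_i'=A_i\setminus\{c\}$ and consider the profile $(A_i',A_{-i})$: the score of $c$ strictly drops while every other candidate's score is unchanged. For each $d\in W$ (and necessarily $d\neq c$), relative rank monotonicity applies directly, since $s(d,\cdot)$ is unchanged and the only score to decrease is that of $c\neq d$; hence every $d\in W$ remains elected and the outcome is still $W$. Then $A_i'$ yields the same utility as $A_i$ but is strictly shorter, contradicting minimality of the MBR ballot $A_i$. This forces $s(c,A)=0$ for all $c\notin W$, and since $|W|=k$ this already gives the ``at most $k$'' statement.

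For the second claim, suppose some $c\in W$ has $s(c,A)\ge 2$, and let voter $i$ approve $c$. Deleting $c$ from $A_i$ leaves $c$ with score at least $1$ and all other scores unchanged. Relative rank monotonicity again keeps each $d\in W\setminus\{c\}$ elected, pinning down $k-1$ of the seats. The one step that does not follow from relative rank monotonicity is re-electing $c$ itself, because its own score has fallen. Here I would invoke the first claim: every candidate outside $W$ carries zero approvals, whereas $c$ retains a strictly positive score, so under the AV-rule $c$ strictly outscores every non-winner and must occupy the remaining seat. Thus the outcome is once more $W$, the shortened ballot is still a best response, and minimality is contradicted, giving $s(c,A)\le 1$ for all $c\in W$. (Alternatively, one could obtain the same conclusion from monotonic robustness applied to the reinforcement of $c$ that recovers $A$ from $(A_i',A_{-i})$: the only permitted alternative outcome would swap a zero-score candidate into $W$, which the AV-rule rules out.)

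I expect the main obstacle to be exactly this re-election step in the second claim. Relative rank monotonicity is tailored to candidates whose scores do not decrease, so it disposes of the untouched seats immediately; but the deleted candidate $c$ needs the extra input that all non-winners hold zero approvals (the first claim) combined with the concrete AV-rule scoring to guarantee $c$ stays in the top $k$. Keeping the two claims in order, so that the first is available when handling the second, is therefore the crucial structural point of the argument.
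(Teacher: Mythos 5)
Your proposal is correct and follows essentially the same route as the paper's own proof: in both parts you delete a single approval, use relative rank monotonicity to keep the untouched winners elected, and (for the second claim) use the already-established first claim to conclude that $c$, with its remaining positive score, still beats all zero-score non-winners, contradicting minimality of the MBR ballot. The structural point you highlight --- proving the claims in order so the first is available for the second --- is exactly how the paper argues.
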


\begin{proof}
For the first statement: 
Assume $(W,A)$ is a lazy-PNE and suppose for the purpose of a contradiction that $s(c', A)>0$ for some $c'\notin W$. Let $i\in N$ be a voter such that $c'\in A_i$, note that $A_i$ must be a MBR ballot. Now suppose that $i$ submits the alternate approval ballot $A_i'=A_i\backslash \{c'\}$, then based on the relative rank monotonicity property the winning committee does not change. Thus, it must be that $A_i$ was not a MBR which is a contradiction.

For the final statement: 
Applying the first statement we have that $s(c', A)=0$ for all $c'\notin W$, it follows that at most $k$ candidates receive a positive share of approvals. Now for the purpose of a contradiction suppose that $s(c, A)\ge 2$ for some $c\in W$. Let $i\in N$ be a voter such that $c\in A_i$, note that $A_i$ must be a MBR ballot. Now if voter $i$ submits the alternate ballot $A_i'=A_i\backslash \{c\}$, then only candidate $c$ receives a change in approval score. Furthermore, candidate $c$ still has a positive number of approvals (i.e. at least $1$ approval) and so is still among the top $k$ candidates in terms of highest approvals $s(\cdot, A)$. It follows that candidate $c$ is still elected and the election outcome is unchanged. Thus, it must be that $A_i$ was not a MBR which is a contradiction. 
\end{proof}


It was noted in Desmedt and Elkind~\cite{DyEe10} that a lazy-PNE need not exist under the single-winner plurality voting setting. This setting is a special case of our multi-winner approval election when setting $k=1$. Notice that any ballot-length restriction $R$ has no effect under the assumption of lazy voting in the single-winner setting ($k=1$). We provide a statement of the $k=1$ result below and simply note that it is easy to construct examples where the same conclusion is reached for $k>1$.

This non-existence of lazy-equilibria, or instability of lazy voting, stems from voter incentives to abstain (or vote for a minimal number of candidates) from the election when their vote is not pivotable. This in turn means that other voters are more likely to be pivotable since in effect there are less votes and less `participating' voters. This push and pull of incentives for lazy voters to vote minimally but also exploit instances where their vote is pivotable can lead to examples where a lazy-equilibrium does not exist.


\begin{proposition}[See Desmedt and Elkind~\cite{DyEe10}]
A lazy-PNE need not exist even in single-winner elections (i.e. $k=1$).
\end{proposition}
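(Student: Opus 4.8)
The plan is to prove the statement by exhibiting an explicit single-winner ($k=1$) instance that admits no lazy-PNE, using \Cref{lazyprem} to cut the search for equilibria down to a handful of profiles. Since $k=1$, a lazy-PNE $(W,A)$ has $W=\{w\}$ for a single winner, and \Cref{lazyprem} forces $s(w,A)\le 1$ together with $s(c,A)=0$ for every $c\neq w$. Thus the total number of approvals cast is either zero or one, leaving exactly two families of candidate equilibria: the \emph{all-abstain} profile, in which $w$ is the $\triangleright$-maximal candidate; and the \emph{single-vote} profiles, in which precisely one voter approves a single candidate $w$ (which then wins $1$ to $0$) and everyone else abstains.

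First I would refine the single-vote case. When every voter but $i$ abstains, voter $i$ fully controls the outcome: approving any single candidate makes that candidate win $1$ to $0$, whereas abstaining elects the $\triangleright$-maximal candidate. Hence $i$'s best response is to approve her favourite, and for this to be an MBR (a nonempty ballot) her favourite must not be the $\triangleright$-maximal candidate, since otherwise abstaining would secure the same winner with a strictly shorter ballot. Consequently every single-vote equilibrium has winner equal to the lone voter's favourite, and that favourite is not the $\triangleright$-maximal candidate. This observation is what makes the remaining verification finite and small.

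Next I would design the preferences so that both families are empty. The two levers, read off from the cases above, are: (i) make the $\triangleright$-maximal candidate nobody's favourite, which kills the all-abstain profile (that would require the $\triangleright$-maximal candidate to top every voter's order); and (ii) for each candidate $w$ that \emph{is} some voter's favourite, arrange a second voter $j$ and a candidate $c$ with $c\succ_j w$ and $c\triangleright w$, so that $j$ profitably deviates by approving $c$, which ties $w$ at one approval and then wins the tie by lexicographic priority. A concrete instance realising this uses three candidates with priority $a\triangleright b\triangleright c$ and two voters with orders $b\succ_1 a\succ_1 c$ and $c\succ_2 a\succ_2 b$: here $a$ is nobody's favourite, a single vote for $b$ is upset by voter $2$ switching to $a$, and a single vote for $c$ is upset by voter $1$ switching to $b$ (or to $a$). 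I would then confirm by direct enumeration over the finitely many profiles surviving \Cref{lazyprem} that none is a lazy-PNE.

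The main obstacle is getting the construction to defeat \emph{every} surviving profile simultaneously, and this hinges entirely on the interplay between the preference orders and the tie-break $\triangleright$. The delicate point is that the $\triangleright$-maximal candidate is immune to being overtaken by one extra approval once it already holds an approval, so if it were anyone's favourite it would anchor a stable single-vote equilibrium. The construction must therefore keep the $\triangleright$-maximal candidate off the top of every order while still placing it (and other $\triangleright$-superior candidates) high enough in enough voters' orders to supply the destabilising tie-breaking deviations of (ii). Checking that these two requirements are compatible, and that each deviation is genuinely profitable rather than outcome-neutral, is the crux of the argument.
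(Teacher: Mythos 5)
Your proposal is correct, but it necessarily takes a different route from the paper, because the paper does not prove this proposition at all: it is imported as a known result of Desmedt and Elkind~\cite{DyEe10}, with the surrounding text only remarking that such examples are easy to construct and that the same conclusion extends to $k>1$. Your argument supplies the missing construction and is sound within the paper's own model (standard AV-rule with lexicographic tie-breaking). The reduction via \Cref{lazyprem} is used correctly: with $k=1$ the only candidate equilibria are the all-abstain profile and the profiles in which exactly one voter approves a single candidate $w$; minimality rightly excludes $w$ being the $\triangleright$-maximal candidate (abstention would elect it with a strictly shorter ballot), and best response rightly forces $w$ to be the lone voter's favourite. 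In your instance, $a\triangleright b\triangleright c$ with $b\succ_1 a\succ_1 c$ and $c\succ_2 a\succ_2 b$, every surviving profile is destabilised: all-abstain elects $a$ and voter $1$ deviates to $\{b\}$; the profile $A_1=\{b\}$ is upset by voter $2$ approving $a$, which ties and wins by priority; the profile $A_2=\{c\}$ is upset by voter $1$ approving $b$. As a cross-check, your instance is consistent with the paper's later characterisation (\Cref{ifandonlyifk=1}): no candidate is every voter's first preference, $b$ fails condition (ii) because $a\succ_2 b$ with $a\triangleright b$, and $c$ fails because $b\succ_1 c$ with $b\triangleright c$, so indeed no lazy-PNE exists. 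What your approach buys is a self-contained, finitely verifiable proof inside the paper's framework, rather than an appeal to an external reference whose setting (plurality with abstentions) is formally a special case of the model here.
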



We now define a voter's `ideal' set which will later be used to characterise lazy-equilibria. 

Given a voter $i\in N$ with $j^*(i)$ set-extension utility we define voter $i$'s ideal set as
\begin{align}\label{iideal}
W_i^*=\{\text{voter $i$'s top $j^*(i)$ most preferred candidates in $C$}\}.
\end{align}
It is easy to see that if voter $i$'s ideal set is contained in the election outcome, $W_i^*\subseteq W$, then voter $i$ attains maximal utility from the election outcome $W$. 

The following lemma shows that in fact a strict increase in utility is attained when moving from an election outcome $W'$ such that $W_i^*\not\subseteq W'$ to an election outcome $W$ such that $W_i^*\subseteq W$. The proof is left to the appendix however, the intuition is simple; a voter $i$ who derives utility from their top $j^*(i)$ preferred candidates in the committee $W\subsetneq C$ can do no better than having their top $j^*(i)$ preferred candidates in $C$, i.e. $W_i^*$, elected.


\begin{lemma}\label{genutilstrict}
Let $i\in N$ be a voter with $j^*(i)$ set-extension utility and ideal committee $W_i^*$. If $W, W'\subseteq C$ of size $k$ and $W_i^*\subseteq W$ but $W_i^*\not\subseteq W'$, then 
$$U_i(W)>U_i(W').$$
\end{lemma}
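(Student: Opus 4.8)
The plan is to reduce everything to the top $j^*(i)$ positions via the OWA form, show that $W_i^* \subseteq W$ forces $U_i(W)$ to attain the global maximum of $U_i$, and then extract a strict loss for $W'$ at a position carrying positive OWA weight. Write $j^* = j^*(i)$ and let $a_1 \spref_i a_2 \spref_i \cdots \spref_i a_{j^*}$ enumerate $W_i^*$, so that $a_1, \ldots, a_{j^*}$ are the globally best $j^*$ candidates of $C$ under $\spref_i$. By \Cref{j-set-util} we have $\lambda_{j^*}^{(i)} > 0$: since $j^*$ is the smallest index beyond which all weights vanish, position $j^*$ itself must carry positive weight. By the OWA form, for any size-$k$ committee $V$ with descending labelling $V = \{v_1, \ldots, v_k\}$ we have $U_i(V) = \sum_{j=1}^{j^*} \lambda_j^{(i)} u_i(v_j)$, since $j^* \le k$ (as $W_i^* \subseteq W$ has size $j^*$).

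First I would observe that whenever $W_i^* \subseteq W$, the candidates $a_1, \ldots, a_{j^*}$ — being the best $j^*$ candidates overall — are exactly the top $j^*$ candidates of $W$, so $U_i(W) = \sum_{j=1}^{j^*} \lambda_j^{(i)} u_i(a_j)$, the maximum value $U_i$ can take on any size-$k$ committee (each $\lambda_j^{(i)} \ge 0$ and $a_j$ is the best available candidate for position $j$). Next, labelling $W'$ in descending order as $b_1 \spref_i \cdots \spref_i b_k$, we have $U_i(W') = \sum_{j=1}^{j^*} \lambda_j^{(i)} u_i(b_j)$; since $a_j$ is the globally $j$-th best candidate, the $j$-th best element of the subset $W'$ satisfies $u_i(b_j) \le u_i(a_j)$ for every $j$, whence
$$U_i(W) - U_i(W') = \sum_{j=1}^{j^*} \lambda_j^{(i)}\big(u_i(a_j) - u_i(b_j)\big) \ge 0.$$

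The main obstacle — and the crux of the argument — is upgrading this to a strict inequality, because the natural point of strictness (the first missing ideal candidate) need not carry positive OWA weight. To handle this I would let $\ell$ be the smallest index with $a_\ell \notin W'$ (such $\ell$ exists as $W_i^* \not\subseteq W'$), so that $a_1, \ldots, a_{\ell-1} \in W'$ are the top $\ell-1$ elements of $W'$, i.e. $b_j = a_j$ for $j < \ell$. I then claim $u_i(b_j) < u_i(a_j)$ for every $j \ge \ell$: if instead $b_j = a_j$ for some such $j$, then the $j-1$ elements of $W'$ ranked above $b_j = a_j$ would all have to be strictly better than $a_j$, and the only candidates in $C$ strictly better than $a_j$ are $a_1, \ldots, a_{j-1}$; counting forces $\{b_1, \ldots, b_{j-1}\} = \{a_1, \ldots, a_{j-1}\} \subseteq W'$, so $a_\ell \in W'$ (either $\ell \le j-1$ directly, or $\ell = j$ giving $b_j = a_\ell \in W'$), contradicting $a_\ell \notin W'$. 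Strict preferences then turn $u_i(b_j) \le u_i(a_j)$ into $u_i(b_j) < u_i(a_j)$.

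Applying this claim at the index $j = j^* \ge \ell$, where $\lambda_{j^*}^{(i)} > 0$, isolates a strictly positive term in the displayed sum, so that
$$U_i(W) - U_i(W') \ge \lambda_{j^*}^{(i)}\big(u_i(a_{j^*}) - u_i(b_{j^*})\big) > 0,$$
which yields $U_i(W) > U_i(W')$ and completes the argument. The only delicate point is the propagation of strictness from position $\ell$ up to position $j^*$; everything else is a routine consequence of the nonnegativity of the OWA weights and the global optimality of $W_i^*$.
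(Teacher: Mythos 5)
Your proof is correct and follows essentially the same route as the paper's: both reduce $U_i$ to the top $j^*(i)$ components of the ordered utility vectors, observe the componentwise domination $u_i(b_j)\le u_i(a_j)$, and conclude by securing strict inequality at position $j^*(i)$, where the OWA weight $\lambda_{j^*}^{(i)}$ is positive by minimality in \Cref{j-set-util}. The only cosmetic difference is how strictness is propagated from the first missing ideal candidate to position $j^*(i)$ --- you use a counting argument on the top $j-1$ elements of $W'$, whereas the paper chains the inequalities $u_i(c_{j+1}')<u_i(c_j')\le u_i(c_{j+1})$ --- but the two devices are interchangeable.
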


As seen in \Cref{genutilstrict}, a voter $i$ with her ideal set $W_i^*$ contained in the election outcome strictly prefers this to any other outcome not containing $W_i^*$. It is natural to suspect that when $W_i^*$ is not contained in $W$ but contains more elements of the ideal set $W_i^*$ than $W'$, then $U_i(W)>U_i(W')$. This however is not true in general. For example consider a voter $i$ with preferences
$$\spref_i: \qquad a, b, c, d$$
$k=2$, and a $j^*(i)=2$ set-extension utility such that $\lambda_1^{(i)}=0$ then her utility is
$$U_i(W)=\lambda_1^{(i)}u_i(c_1)+\lambda_2^{(i)}u_i(c_2)=0\,u_i(c_1)+u_i(c_2).$$
 That is, voter $i$ attains utility only from their least preferred candidate in the election outcome of size $k=2$. Thus $W_i^*=\{a, b\}$ however
$$U_i(\{c, d\})=U_i(\{b, d\})=U_i(\{a, d\})=u_i(d).$$
Thus, we do not attain strict preference for committees which contain more ideal set members $W_i^*$. Note, however that this is no longer the case if $\lambda_1^{(i)}>0$. This provides (technical) motivation for considering cases where $\lambda_1^{(i)}>0$ -- or more generally, a voter with $j^*$ set-extension has $\lambda_j^{(i)}>0$ for all $j\le j^*$. We call this requirement a `full-rank' set-extension and define the concept formally below. This is utilised to provide a characterisation of the existence of lazy-PNE.


\begin{definition}\emph{[Full-rank $j^*$ set-extension]}\\
Let voter $i\in N$ have a $j^*(i)$ set extension. If in addition 
$$\lambda_j^{(i)}>0\qquad \forall j\le j^*(i),$$
then we say that voter $i$ has a \emph{full-rank} $j^*(i)$-set extension.
\end{definition}

Notice that the full-rank assumption is automatically satisfied in the single-winner setting (i.e. $k=1$). Thus, the consideration of full-rank and non-full-rank set-extension utilities is a unique challenge faced in multi-winner election. Also note that for any non-full-rank set-extension utility it can be approximated arbitrarily well by a full-rank set-extension. 

The following lemma states a more general version of  \Cref{genutilstrict} for when a voter strictly prefers one committee over another, under the full-rank assumption. The proof is left to the appendix. The intuition of the result is similar to that of  \Cref{genutilstrict} and is implicit in the example provided as motivation for full-rank set-extension utilities.

\begin{lemma}\label{fullrankpref}
Let $W'\subseteq C$ such that $c^*\in W_i^*$ and $c^*\notin W'$ then if
$$W=W'\cup\{c^*\}\backslash\{c\}\qquad \text{for some $c\in W'$ such that $c^*\spref_i c$},$$
we have
\begin{align}\label{frankeq}
U_i(W)\ge U_i(W').
\end{align}
Furthermore, under the full-rank assumption strict inequality holds for (\ref{frankeq}).
\end{lemma}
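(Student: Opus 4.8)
The plan is to obtain the weak inequality directly from \Cref{weakmonoutil} and then upgrade it to a strict inequality under the full-rank hypothesis by locating a single coordinate of the sorted-utility vector where the improvement is strict \emph{and} the corresponding OWA weight is positive.

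First I would dispose of the weak inequality~(\ref{frankeq}). Since $W = W' \cup \{c^*\} \backslash \{c\}$ with $c \in W'$ and $c^* \spref_i c$, this is exactly the conclusion of \Cref{weakmonoutil} (applied with the roles $c \mapsto c$, $c' \mapsto c^*$). Equivalently, recalling the dot-product form (\ref{genutil2}), the fact established inside the proof of \Cref{weakmonoutil} is the component-wise domination $\hat{u}_i(W) \ge \hat{u}_i(W')$, which together with $\lambda^{(i)} \ge \boldsymbol{0}$ already yields $U_i(W) \ge U_i(W')$.

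For the strict inequality I would track precisely where the two sorted vectors differ. Write $W' = \{c'_1, \ldots, c'_k\}$ with $c'_1 \spref_i \cdots \spref_i c'_k$, and let $c = c'_p$. Inserting the more-preferred $c^*$ in place of $c'_p$ places $c^*$ at some position $q \le p$ in the sorted order of $W$, pushing $c'_q, \ldots, c'_{p-1}$ down one slot while leaving positions $1, \ldots, q-1$ and $p+1, \ldots, k$ untouched. Hence the sorted vectors agree outside the block $\{q, \ldots, p\}$, and on that block each coordinate of $\hat{u}_i(W)$ strictly exceeds the corresponding coordinate of $\hat{u}_i(W')$: each difference has the form $u_i(c^*) - u_i(c'_q)$ or $u_i(c'_{j-1}) - u_i(c'_j)$, all strictly positive because $\spref_i$ is strict. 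In particular coordinate $q$ is strictly improved.

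The crux — and the step I expect to need the most care — is showing $q \le j^*(i)$, so that this strictly improved coordinate carries a positive weight under full-rank. Here I would invoke the hypothesis $c^* \in W_i^*$: by definition $c^*$ is among voter $i$'s top $j^*(i)$ candidates in all of $C$, so at most $j^*(i) - 1$ candidates of $C$, and a fortiori at most $j^*(i) - 1$ candidates of the committee $W \subseteq C$, are strictly preferred to $c^*$. Thus $c^*$ occupies position $q \le j^*(i)$ within $W$. Under the full-rank assumption $\lambda_q^{(i)} > 0$, and since every remaining coordinate difference is non-negative with $\lambda^{(i)} \ge \boldsymbol{0}$,
\begin{align*}
U_i(W) - U_i(W') = \sum_{j=1}^k \lambda_j^{(i)}\big([\hat{u}_i(W)]_j - [\hat{u}_i(W')]_j\big) \ge \lambda_q^{(i)}\big(u_i(c^*) - u_i(c'_q)\big) > 0,
\end{align*}
which gives the claimed strict inequality.
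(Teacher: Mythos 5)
Your proposal is correct and follows essentially the same route as the paper's own proof: both locate $c^*$ within the sorted order of $W$, use $c^* \in W_i^*$ to bound its position by $j^*(i)$ so that the strictly improved coordinate carries a positive weight under full-rank, and obtain the weak inequality from the component-wise domination of the sorted utility vectors (which is exactly the content of \Cref{weakmonoutil}). Your write-up is simply a more explicit version of the paper's argument, spelling out the shifted block $\{q,\ldots,p\}$ of coordinates where the two sorted vectors differ.
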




Given a (strict) preference profile $\spref=(\spref_1, \ldots, \spref_n)$ and profile of set-extension values $(j^*(1), \ldots, j^*(n))$ , we define the \emph{ideal set of all voters}
\begin{align*}
W^*&=\cup_{i\in N} W_i^*\\
&=\cup_{i\in N} \{\text{voter $i$'s top $j^*(i)$ candidates in $C$}\}.
\end{align*}
This is the union of all voters most preferred committee $W_i^*$ (introduced in (\ref{iideal})). 


The size of the set $W^*$ is a measure of the diversity of voter preferences modified by their $j^*(i)$ set-extension value. For example if $|W|^*\le k$ there is a high degree of consensus among voter preferences (after modifying for set-extension values) and this can be accommodated with a $k$ sized election. Whilst if $|W^*|>k$ then voters have diverse preferences and no election outcome of size $k$ can accomodate all voter preferences.

Under the full-rank assumption, we show that the only lazy-PNE are when $W^*\subseteq W$ or $W\subsetneq W^*$. Despite previous results showing undesirable outcomes of lazy voting such as the non-existence of equilibria - this results highlights a close connection between lazy-PNE and the top preferences of voters. This can be viewed as a desirable property of lazy-PNE when they exist. 

The intuition for the result is as follows; in equilibrium no candidate, say $c'$, outside of the ideal set $W^*$ can be elected, for if this were not the case then there would exist some unelected candidate $c\in W^*$ and at least one voter will be able to profitably change their approval ballot to elect $c$ instead of $c'$. The lazy tendency of voters is crucial as it means in equilibrium at most $k$ candidates receive precisely one approval vote and all other candidate receive none (\Cref{lazyprem}) -- and so, this ensures a deviation which elect $c$ instead of $c'$ to exist. The full-rank assumption is then required to ensure such a deviation is strictly profitable for the deviating voter (see \Cref{fullrankpref}). Without the full-rank assumption the voter may be indifferent between such a deviation, even if the voter $i\in N$ is such that $c\in W_i^*$.

\begin{lemma}\label{keylemma}
Under the full-rank assumption, if $W$ is a lazy-PNE then either
\begin{align}\label{lemmaeq1}
W^*\subseteq W \qquad \text{ or }\qquad W\subsetneq W^*.
\end{align}
Note that if the winning committee $W$ is to be of size $k$ then if $|W^*|\le k$ only the former is possible, and if $|W^*|>k$ only the latter is possible.
\end{lemma}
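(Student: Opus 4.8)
The plan is to argue by contradiction. Suppose $W$ is a lazy-PNE supported by a profile $A$ but neither disjunct in \eqref{lemmaeq1} holds. First I would translate the failure of both disjuncts into a clean combinatorial statement: since $W^*\not\subseteq W$ there is a candidate $c\in W^*\setminus W$, and since $W\not\subsetneq W^*$ while $W=W^*$ would force $W^*\subseteq W$, we must also have $W\not\subseteq W^*$, giving a candidate $c'\in W\setminus W^*$. Thus the bad case is precisely the simultaneous existence of an unelected ``wanted'' candidate $c$ and an elected ``unwanted'' candidate $c'$. I would then fix a voter $i$ with $c\in W_i^*$; because $c'\notin W^*$ we have $c'\notin W_i^*$ and hence $c\spref_i c'$.

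The engine of the contradiction is \Cref{lazyprem}, which forces a very sparse score vector: $s(c,A)=0$ for every $c\notin W$ and $s(c,A)\le 1$ for every $c\in W$, so the supporting ballots are pairwise disjoint and each candidate carries at most one approval. I would use this to exhibit a strictly profitable unilateral deviation for voter $i$, contradicting that $A_i$ is a best response (a lazy-PNE requires every $A_i$ to be an MBR, hence in particular a BR). The basic move is to let $i$ reinforce $c$: since $s(c,A)=0$, adding $c$ raises its score to $1$, and by monotonic robustness the outcome is either unchanged or equals $W\cup\{c\}\setminus\{d\}$ for a single displaced candidate $d\in W$. The low-score structure lets me exclude the ``unchanged'' alternative: if some member of $W$ still has score $0$ then $c$ strictly outscores it and must enter, while if every member of $W$ already has score $1$ voter $i$ can instead swap one of its own approvals for $c$ so that exactly $k$ candidates retain score $1$. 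Once $c$ is in and a single candidate $d$ leaves, \Cref{fullrankpref} applies under the full-rank assumption: if $c\spref_i d$ then $U_i(W\cup\{c\}\setminus\{d\})>U_i(W)$, the desired contradiction.

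The main obstacle is ensuring the displaced candidate $d$ is one that voter $i$ ranks below $c$, because $d$ is selected by the lexicographic tie-break rather than by $i$'s preferences. I would dispatch this by a short case analysis. If $d\notin W_i^*$ then $c\spref_i d$ automatically; and if $d\in W_i^*$ but $c\spref_i d$ the swap still improves $i$ by \Cref{fullrankpref}. The only delicate case is when the tie-break would evict a candidate $d\in W_i^*$ with $d\spref_i c$. Here I would have $i$ \emph{protect} $d$ by also approving it (raising $d$ to score $1$ so it is no longer the marginal candidate) and simultaneously push $c$, or, exploiting the disjointness of ballots, have $i$ drop one of its own now-unneeded approvals on a candidate it ranks below $c$ (such as $c'$, when $i$ is the unique voter approving $c'$) in exchange for $c$. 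Either refinement yields an outcome of the form $W\cup\{c\}\setminus\{e\}$ with $c\spref_i e$, so that \Cref{fullrankpref} again delivers a strict gain (or, should the deviation in fact place all of $W_i^*$ in the committee, \Cref{genutilstrict} applies directly since $W_i^*\not\subseteq W$).

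Finally I would record the size dichotomy stated after \eqref{lemmaeq1}: since $|W|=k$, the inclusion $W\subsetneq W^*$ requires $|W^*|>k$ whereas $W^*\subseteq W$ requires $|W^*|\le k$. Hence if $|W^*|\le k$ only $W^*\subseteq W$ can occur, and if $|W^*|>k$ only $W\subsetneq W^*$ can occur, the strictness being automatic because $|W|=k<|W^*|$.
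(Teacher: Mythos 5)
Your opening moves (extracting $c\in W^*\setminus W$ and $c'\in W\setminus W^*$, invoking \Cref{lazyprem} for the sparse score structure, and closing with \Cref{fullrankpref}) match the paper's, but there is a genuine gap: every deviation you construct is a deviation by voter $i$, the voter with $c\in W_i^*$, and there are configurations consistent with \Cref{lazyprem} in which voter $i$ has \emph{no} profitable deviation at all. Concretely, suppose $c'$ is approved by a \emph{different} voter $j\neq i$ (so $s(c',A)=1$ but voter $i$ cannot touch that approval), every member of $W$ has score $1$ coming from voters other than $i$, and every member of $W$ has higher lexicographic priority than $c$. Then any ballot $A_i'$ that voter $i$ submits raises $s(c,\cdot)$ to at most $1$ while all of $W$ retains score at least $1$, and the tie-break re-elects $W$ unchanged. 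Your two patches fail exactly here: voter $i$ may have no approvals to ``swap'' (e.g.\ $A_i=\emptyset$), and your parenthetical ``when $i$ is the unique voter approving $c'$'' assumes the one case that need not hold. A minimal instance: $C=\{c_1,c_2,c_3\}$ with $c_1\triangleright c_2\triangleright c_3$, $k=2$, voter $1$ has $W_1^*=\{c_3\}$, voter $2$ has $W_2^*=\{c_2\}$, ballots $A_1=\emptyset$, $A_2=\{c_1,c_2\}$, outcome $W=\{c_1,c_2\}$. Both disjuncts of the lemma fail, yet voter $1$ cannot alter the outcome by any unilateral deviation, so your argument produces no contradiction.

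What rules out such configurations is not a deviation by voter $i$ but the \emph{laziness of the voter $j$ approving $c'$}, and this is precisely the step the paper inserts before its tie-break case analysis: in any lazy-PNE, $s(c',A)=0$ for every $c'\in W\setminus W^*$. The paper proves this by examining the voter $j$ with $c'\in A_j$: if $W_j^*\subseteq W$, then dropping $c'$ leaves $j$'s utility unchanged (only $c'$ itself can fall out of the committee), so $A_j$ is a best response but not a \emph{minimal} one --- note this sub-case relies on minimality rather than a strict utility improvement, which is why it cannot be recast as a failure of best response (in the example above, voter $2$'s ballot is a BR but not an MBR); otherwise there exists $c^*\in W_j^*\setminus W$ and voter $j$ can force the outcome $W\cup\{c^*\}\setminus\{c'\}$, a strict improvement by \Cref{fullrankpref}. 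Only once $s(c',A)=0$ is established does the paper split on the tie-break: if $c\triangleright c'$, the zero scores and the priority ordering already contradict $c'\in W$, $c\notin W$; if $c'\triangleright c$, voter $i$'s deviation $A_i'=W\cup\{c\}\setminus\{c'\}$ now genuinely elects $W\cup\{c\}\setminus\{c'\}$, because no other voter approves $c'$. Your concluding size dichotomy is fine, but the core of the lemma requires this two-voter structure, which your proposal is missing.
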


\begin{proof}
Suppose for the purpose of a contradiction that $W$ is a lazy-PNE supported by the ballot profile $A$ but (\ref{lemmaeq1}) does not hold. That is, there exists $c, c' \in C$ such that
\begin{align}
&c\in W^*\qquad c\notin W\\
 &c'\notin W^*\qquad c'\in W.
\end{align}
Now since the tie-breaking priority $\triangleright$ is a complete ordering either $c\triangleright c'$ or $c'\triangleright c$.

In both instances if $W$ is a lazy-PNE it must be that $s(c', A)=0$. To see this note that since $c'\in W$ we have $s(c', A)\le 1$ (\Cref{lazyprem}), if $s(c',A)=1$ then there exists a voter $j$ such that $c'\in A_j$ and $c'\notin W_j^*$. Furthermore there are at most $k$ candidates with scores equal to $1$. Now either $W_j^*\subseteq W$ or not. In the first case, voter $j$ has their most ideal set elected which does not include $c'$ and so removing this vote does not affect voter $j$'s utility and so under lazy tendencies voter $j$ will not submit $c'\in A_j$. Alternatively, suppose that there exists $c^*\in W_j^*$ but $c^*\notin W$, then clearly $s(c^*,A)=0$ (\Cref{lazyprem}). But now, if voter $j$ submits the alternate approval ballot
$$A_j'=W\cup\{c^*\}\backslash \{c'\}$$
then there will be precisely $k$ candidates $W\cup\{c^*\}\backslash \{c'\}$ with positive approval scores $s(\cdot, A')$. It follows that the new election outcome will be 
$$W'=W\cup\{c^*\}\backslash \{c'\},$$
which is strictly preferred to $W$ by voter $i$ (\Cref{fullrankpref}). Thus we have a contradiction that $A_i$ is a MBR ballot.

We conclude that $s(c', A)=0$, at most $(k-1)$ candidate receive non-zero approval scores and $s(c, A)=0$ (since $c\notin W$).

In the first instance, this gives a contradiction since both candidates $c$ and $c'$ have zero approvals but $c'$ is elected despite having a lower lexicographic priority rank.

In the second instance, a voter say $i\in N$ with $c\in W_i^*$ could submit the alternate ballot
$$A_i'=W\cup\{c\}\backslash\{c'\}.$$
This will lead to precisely $k$ candidates i.e. $W\cup\{c\}\backslash\{c'\}$ receiving positive approval scores $s(\cdot, A')$, It follows that the new election outcome will be
$$W'=W\cup\{c\}\backslash\{c'\},$$
which is strictly preferred to $W$ by voter $i$ (\Cref{fullrankpref} and applying the full-rank assumption) since $c'\notin W_i^*$. Thus we have a contradiction that $A_i$ is a MBR ballot. 
\end{proof}

We now consider the first case of a lazy-equilibrium where $W^*\subseteq W$. The following theorem shows that if $|W^*|\le k$, there is a unique lazy-PNE $W$ which contains $W^*$. The intuition is straightforward; if every voter has their most ideal set of candidates in $C$ elected into $W$ (hence $W^*\subseteq W$) then every voter's utility from $W$ is independent of candidates $c'\notin W^*$ (even if $c'\in W$). Thus, since voters are lazy every such candidate $c'\notin W^*$ will receive zero votes. It follows that any candidate $c'\notin W^*$ but $c'\in W$ is elected based on the (lexicographic) tie-breaking rule applied to the set of candidates with zero approval scores -- this necessarily leads to a unique set of candidates outside of $W^*$ being elected.

Before formally presenting the result, we first introduce some notation which will be utilised in a number of results of this subsection.

Label the candidate set $C$ according to their lexicographic ordering ($\triangleright$) such that
\begin{align}\label{doublehash}
c_1\triangleright c_2\triangleright\cdots \triangleright c_m,
\end{align}
where $|C|=m$. Similarly given a committee $W\subseteq C$ of size $k$, label the elements according to the lexicographic ordering, say $W=\{w_1, \ldots, w_k\}$. However, it will be convenient to introduce a mapping $\sigma: [k]\rightarrow [m]$ which translates the elements of $W$, say $w_j$, to an element, say $c_{\sigma(j)}\in C$.  That is, we write the elements of $W$ as
\begin{align}\label{labelling}
W&=\{w_1, \ldots, w_k\}\notag\\
&=\{ c_{\sigma(1)}, \ldots, c_{\sigma(k)}\}.
\end{align}
such that $\sigma(\ell)<\sigma(\ell')$ for all $\ell<\ell'$ and so $c_{\sigma(\ell)}\triangleright c_{\sigma(\ell')}$.


\begin{theorem}\label{containthm}
The committee $W\subseteq C$ of size $k$ such that
$$W^*\subseteq W,$$
is a lazy-PNE if and only if for every $c\in W-W^*$ has a higher priority than every $c'\notin W$.
\end{theorem}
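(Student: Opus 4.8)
The plan is to prove the two directions separately. Throughout I would write $Z = W\setminus W^*$ for the ``excess'' winners and $\overline{W} = C\setminus W$ for the losers, so that the stated criterion becomes: every $c\in Z$ out-ranks (under $\triangleright$) every $c'\in\overline{W}$. For the \emph{only if} direction, suppose $(W,A)$ is a lazy-PNE. By \Cref{lazyprem} every loser has score $0$ and every winner has score at most $1$. The first step is to show that every $c\in Z$ in fact has score $0$: if some voter $i$ had $c\in A_i$, then deleting $c$ from $A_i$ lowers only $s(c)$ and leaves all other scores unchanged, so relative rank monotonicity (\Cref{mon3}) keeps every member of $W^*\subseteq W_A$ (each distinct from $c$, since $c\in Z$) in the outcome; hence $W_i^*\subseteq W^*$ is still elected and $i$'s utility is unchanged, contradicting minimality of $A_i$. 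Thus all candidates of $Z\cup\overline{W}$ carry score $0$ and the only positively scored candidates lie in $W^*$. Since the AV-rule breaks ties among the zero-score candidates by $\triangleright$, and it seats every (zero-score) candidate of $Z$ while rejecting every (zero-score) candidate of $\overline{W}$, every candidate of $Z$ must out-rank every candidate of $\overline{W}$, which is the desired condition.

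For the \emph{if} direction I would exhibit a supporting profile. Let $P=\{c\in W^*:\ c'\triangleright c\ \text{for some}\ c'\in\overline{W}\}$ be the ideal-set candidates that are ``blocked'' by some higher-priority loser, and give each $c\in P$ exactly one approval, cast by a voter $i$ with $c\in W_i^*$ (such a voter exists since $c\in W^*$); a voter's ballot is the set of $P$-candidates assigned to it, and no other approvals are cast. Then every candidate of $P$ has score $1$ and all others score $0$, so the AV-rule seats $P$ and fills the remaining $k-|P|$ seats by priority among the zero-score candidates $(W^*\setminus P)\cup Z\cup\overline{W}$. By definition of $P$ every candidate of $W^*\setminus P$ out-ranks every loser, and by hypothesis so does every candidate of $Z$; as $(W^*\setminus P)\cup Z$ has exactly $k-|P|$ members it is precisely the top-ranked block of zero-score candidates, so the outcome is $(W^*\setminus P)\cup Z\cup P = W$, as required.

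It remains to check that each ballot is minimal, and this is the step I expect to be the crux. Every voter already attains maximal utility because $W^*\subseteq W$ (\Cref{genutilstrict}), so only ballot \emph{size} is in question. For the lower bound I would invoke monotonic robustness (\Cref{mon2}): building a voter's ballot up one approval at a time from the empty ballot, each reinforcement can introduce at most one new member into the committee, so the number of a voter's ideal candidates that are \emph{absent} when she abstains is a lower bound on the size of any best response. The delicate part is matching this bound with the constructed ballot, i.e.\ verifying that the candidates a voter is asked to approve are exactly the ideal candidates that would be edged out by a higher-priority loser if she abstained (the blocked candidates of $P$ assigned to her), while her remaining ideal candidates—those out-ranking every loser—are seated ``for free'' by tie-breaking even after the paid-for candidates consume their seats. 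The full-rank hypothesis enters here through \Cref{fullrankpref}, which guarantees that dropping any single ideal candidate from the committee strictly lowers utility, so that no approval in the constructed ballot can be removed. I expect the careful bookkeeping of which ideal candidates are ``free'' versus ``paid''—and confirming that the construction exactly realises the monotonic-robustness lower bound for every voter—to be the main obstacle.
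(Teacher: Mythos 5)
Your ``only if'' direction is correct and, if anything, more carefully argued than the paper's own one-line version of it (the paper simply asserts no candidate of $W\setminus W^*$ can carry positive score; you derive it from \Cref{lazyprem} and \Cref{mon3}). Your supporting profile for the ``if'' direction --- one approval for each $c\in W^*$ that has a loser above it in $\triangleright$ --- is essentially the same profile the paper constructs by its bottom-up iteration. The genuine gap is exactly in the step you flag as the crux: your plan for verifying minimality rests on the claim that the candidates a voter is assigned (her part of $P$) coincide with her ideal candidates that are absent when she abstains, so that her ballot meets the monotonic-robustness lower bound of \Cref{mon2} with equality. That claim is false. Take $C=\{a,x,y,b\}$ with priority $a\triangleright x\triangleright y\triangleright b$, $k=2$, and a voter $1$ with full-rank ideal set $W_1^*=\{x,y\}$, so that $W=W^*=\{x,y\}$ and the theorem's condition is vacuous. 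Then $P=\{x,y\}$ is assigned to voter $1$; but if she abstains the outcome is $\{a,x\}$, so only $y$ is missing and your lower bound is $1$, whereas every singleton ballot fails: $\{x\}$ yields $\{x,a\}$, $\{y\}$ yields $\{y,a\}$, $\{a\}$ yields $\{a,x\}$, $\{b\}$ yields $\{b,a\}$. The true MBR size is $2$. The constructed ballot is still minimal, but it does not ``exactly realise the monotonic-robustness lower bound'' --- that bound is simply not tight, so the bookkeeping you propose cannot close the argument.

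The reason the bound fails, and what a correct argument must exploit, is tie-breaking rather than robustness: if voter $i$ omits some assigned candidate $c\in P_i$ from her ballot, then $c$ has score zero, and for $c$ to be seated \emph{every} zero-score candidate ranked above $c$ must be seated too --- in particular a loser blocking $c$ (unless she wastes approvals on losers), together with all of $\bigl((W^*\setminus P)\cup Z\bigr)\setminus B$, each of which out-ranks every loser and hence out-ranks $c$. Counting these candidates against the $k-\lvert(P\setminus P_i)\cup B\rvert$ seats left for zero-score candidates yields a contradiction, forcing $P_i\subseteq B$ and hence $\lvert B\rvert\ge\lvert P_i\rvert$ for every best response $B$. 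Some argument of this shape (applied to the lowest-priority member of $P_i\setminus B$) is what your proof needs in place of the abstention-bound matching; note that the paper itself takes yet another route, arguing via \Cref{genutilstrict} that removing any approval strictly lowers utility, which also requires strengthening since minimality must be checked against all smaller ballots, not only sub-ballots of the constructed one.
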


\begin{proof}
Label the elements of $W^*$ according to a mapping $\sigma: [|W^*|]\rightarrow [m]$ (as was done in  (\ref{labelling})), i.e. 
$$W^*=\{c_{\sigma(1)}, \ldots, c_{\sigma(\ell)}\},$$
where $\ell\le k$.

Now we construct a ballot profile $A$ which satisfies the necessary conditions for $W$ such that $W^*\subseteq W$ to be lazy-PNE. We start with $c_{\sigma(\ell)}$. If $\sigma(\ell)\le k$ then all candidates in $W^*$ have a top $k$ lexicographic ranking $\triangleright$ in $C$ i.e. $W^*\subseteq \{c_1, \ldots, c_k\}$, thus $s(c, A)=0$ for all $c\in C$ is a lazy-PNE. This holds since all voters have their ideal set $W_i^*$ elected simply by the priority $\triangleright$ when no votes are cast, and so no voter will have an incentive to deviate. Also all voters are submitting MBR ballot since $A_i=\emptyset$ for all $i\in N$. In addition since $W_i^*\subseteq W$ for all $i\in N$, there can be no equilibrium whereby a candidate $c\in W$ and $c\notin W^*$ has $s(c, A)>0$. Thus, $s(c, A)=0$ for all $c\notin W^*$ and the candidates in $W-W^*$ will be elected if and only if the second condition holds.

Otherwise, $\sigma(\ell)>k$ and some candidates in $W$ are outside of the top $k$ lexicographic rankings in $C$. To construct a ballot profile $A$ which supports $W$ as a lazy-PNE we do the following: initialise $A_i=\emptyset $ for all $i\in N$,  select a voter $i$ such that $c_{\sigma(\ell)}\in W_i^*$ and set $A_i\mapsto A_i\cup\{c_\ell\}$. Now consider candidate $c_{\sigma(\ell-1)}$. If $\sigma(\ell-1)\le k-1$ then we are done since $\{c_{\sigma(1)}, \ldots, c_{\sigma(\ell-1)}\}\subseteq \{c_1, \ldots, c_{k-1}\}$ and $W^*$ will be elected by the tie-breaking priority $\triangleright$ if no additional votes are cast. Otherwise, $\sigma(\ell-1)>k-1$ and we select a voter, say $i\in N$, such that $c_{\sigma(\ell-1)}\in W_i^*$ and set $A_i\mapsto A_i\cup\{c_{\ell-1}\}$. Repeat this procedure for all candidates in $W^*$.

This is a lazy-PNE since all voters have their most preferred set $W_i^*$ elected - removing any approval will change the outcome to be strictly less desirable (recall \Cref{genutilstrict}) and hence there is no incentive to elect any alternate candidates (clearly no combination of these is optimal either). Again the candidates in $W-W^*$ are also elected if and only if the second condition holds.
\end{proof}

We now present another key theorem which considers the second case where $W\subsetneq W^*$, or equivalently $|W^*|>k$. The result gives necessary conditions for an election outcome to be a lazy-PNE when the winning committee does not contain the ideal set $W^*$; that is, $W\subsetneq W^*$. This result when restricted to the $k=1$ case, will then lead immediately to a theorem presented in Elkind et al.~\cite{Elk15}. 

The result is dependent on the lexicographic tie-breaking rule applied. The first part of the theorem shows that the candidate elected into $W$ in a lazy-equilibrium with the lowest lexicographic tie-breaking preference (or rank) must have a sufficiently low rank. This result may seem counter-intuitive since it implies that if $|W^*|>k$ (i.e. a low degree of consensus among voter preferences) the top-$k$ candidates with respect to the tie-breaking can never form an equilibrium  -- this is stated formally in \Cref{corrrrr}. The intuition for the result is as follows; if voters are lazy and support a high ranked (with respect to the tie-breaking rule) candidate they have an incentive to remove their approval vote and leave the election result up to the tie-breaking rule which their high ranked candidate will win. Under unilateral deviations this is of course a (minimal) best response but this generates an opportunity for other voters to be pivotable an elect a different candidate which they prefer. 

Informally speaking, the second part of the theorem shows that in equilibrium every high ranked (with respect the tie-breaking rule) and unelected candidate, say $\tilde{c}$, must be unanimously less preferred to the lowest ranked and elected candidate $c'$\footnote{Technically speaking we only need unanimity among voter who derive utility from candidate $c'$. That is, $c'$ is among voter $i$'s top $j^*(i)$ most preferred candidates in $W$.}. The intuition follows from the fact that at most $k$ candidates in equilibrium receive non-zero approval scores and the rest receive zero approvals in equilibrium (\Cref{lazyprem}). Thus, any voter under a unilateral deviation can elect the high ranked $\tilde{c}$ candidate -- if $W$ is an equilibrium it must necessarily be the case that all voters agree that such a candidate $\tilde{c}$ is less preferred to the lowest ranked and elected candidate $c'\in W$




We now present the theorem formally but briefly recall some notation, the elements of $W$ can be expressed via the map $\sigma$ as follows
\begin{align}\label{doublehash2}
W&=\{w_1, \ldots, w_k\}\notag\\
&=\{ c_{\sigma(1)}, \ldots, c_{\sigma(k)}\}.
\end{align}
such that $\sigma(\ell)<\sigma(\ell')$ for all $\ell<\ell'$ and so $c_{\sigma(\ell)}\triangleright c_{\sigma(\ell')}$.


\begin{theorem}\label{rlem}
Let $W=\{ c_{\sigma(1)}, \ldots, c_{\sigma(k)}\}\subsetneq W^*$ be lazy-PNE. It must be that $\sigma(k)>k$. Furthermore under the full-rank assumption, for all voters $i\in N$ it must be that if $c_{\sigma(k)}$ is among voter $i$'s top $j^*(i)$ most preferred candidates in $W$ for all $c_j\notin W$ with $j<\sigma(k)$ 
$$c_{\sigma(k)}\spref_i c_j \qquad \forall i\in N.$$
\end{theorem}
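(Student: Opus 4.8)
My plan is to prove both assertions by contradiction, in each case exhibiting an explicit profitable unilateral deviation that violates the lazy best-response requirement. The workhorse throughout is \Cref{lazyprem}: in any lazy-PNE $(W,A)$ every candidate outside $W$ has approval score $0$ and every candidate inside $W$ has score at most $1$. In particular every score lies in $\{0,1\}$, so whenever two candidates are tied the lexicographic order $\triangleright$ decides between them, and I can steer the outcome of a deviation simply by controlling which candidates reach score $1$.

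For the first assertion, suppose $\sigma(k)\le k$. Since $\sigma(1)<\cdots<\sigma(k)$ are distinct positive integers this forces $\sigma(\ell)=\ell$ for all $\ell$, i.e. $W=\{c_1,\dots,c_k\}$ is the lexicographic top-$k$ prefix. The first step is to observe that whenever all non-winners sit at score $0$, the outcome is forced to be exactly this prefix regardless of the (at most unit) scores inside $W$, because every candidate $c_t$ with $t>k$ is beaten on priority by every member of $W$. Hence any approval a voter casts for a member of $W$ is redundant, and minimality of the best response forces every ballot to be empty. Now, using $W\subsetneq W^*$, pick $c\in W^*\setminus W$ with $c\in W_i^*$; voter $i$ deviates to the ballot $A_i'=W_i^*$ (of size $j^*(i)\le k$). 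With all other ballots empty, exactly the $j^*(i)$ candidates of $W_i^*$ reach score $1$, so the new outcome contains $W_i^*$, and since $W_i^*\not\subseteq W$, \Cref{genutilstrict} gives a strict utility gain, contradicting that the empty ballot was a best response.

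For the second assertion, assume toward a contradiction that some voter $i$ has $c_{\sigma(k)}$ among her top $j^*(i)$ candidates of $W$ yet $c_j\spref_i c_{\sigma(k)}$ for some unelected $c_j$ with $j<\sigma(k)$ (such a $c_j$ exists because $\sigma(k)>k$ leaves at least $\sigma(k)-k\ge 1$ high-priority losers). Let $W_0=\{c\in W:s(c,A)=0\}$ be the winners carried purely by tie-breaking. The deviation I use is $A_i'=A_i\cup\{c_j\}\cup W_0$: adding $c_j$ and lifting every score-$0$ winner to score $1$ makes every candidate in $W\cup\{c_j\}$ attain score exactly $1$ while all remaining candidates stay at $0$; among these $k+1$ tied candidates the lexicographically lowest is $c_{\sigma(k)}$ (its index exceeds both $j$ and every $\sigma(\ell)$ with $\ell<k$), so it is precisely the one ejected and the new outcome is $W'=W\cup\{c_j\}\setminus\{c_{\sigma(k)}\}$. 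It then remains to check that $U_i(W')>U_i(W)$: writing $W$ in decreasing $\spref_i$-order, $c_{\sigma(k)}$ occupies some position $p\le j^*(i)$, and replacing it by the strictly preferred $c_j$ raises the ordered utility vector $\hat u_i$ coordinatewise with a strict increase at coordinate $p$; since the full-rank assumption gives $\lambda_p^{(i)}>0$, the dot product $\lambda^{(i)}\cdot\hat u_i$ strictly increases (this is the same sorted-vector comparison underlying \Cref{weakmonoutil} and \Cref{fullrankpref}). This strict improvement contradicts the lazy-PNE assumption.

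The main obstacle is the displacement bookkeeping in the second part: a naive deviation that merely adds $c_j$ would eject the lowest-\emph{ranked} winner, which, if some winner is carried by tie-breaking at score $0$, need not be $c_{\sigma(k)}$, so the swap voter $i$ actually realises could involve a candidate she does not care about. Levelling all score-$0$ winners up to score $1$ before invoking lexicographic tie-breaking is exactly what pins the ejected candidate down to $c_{\sigma(k)}$, and performing the deviation by union with the existing ballot $A_i$ lets me run the argument without knowing $A_i$ explicitly (\Cref{lazyprem} already guarantees $A_i\subseteq W$, so no stray scores appear). The remaining delicate point is strictness of the utility comparison: it is genuinely false without full-rank (a zero weight at coordinate $p$ can make voter $i$ indifferent to the swap), which is precisely why the second assertion, unlike the first, is stated under the full-rank hypothesis.
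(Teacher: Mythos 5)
Your proof is correct and follows essentially the same route as the paper's: \Cref{lazyprem} pins all scores to $\{0,1\}$, part one forces all ballots empty and derives a contradiction via \Cref{genutilstrict}, and part two constructs a deviation that lifts the score-zero winners to one, adds $c_j$, and lets lexicographic tie-breaking eject $c_{\sigma(k)}$, with full-rank supplying strictness. Your only (minor, and slightly cleaner) departure is keeping $c_{\sigma(k)}$ in the deviating ballot so that all $k+1$ relevant candidates are tied at score one, which avoids the paper's two-case analysis on whether $c_{\sigma(k)}$ ends at score zero or one.
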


\begin{proof}
For the first statement, suppose for the purpose of a contradiction that $W\subsetneq W^*$ is a lazy-PNE supported by the ballot profile $A$ and $\sigma(k)\le k$. This implies that there are $k$ candidates in $W$ with lexicographic priority rank at least $k$ and so it must be that 
$$W=\{c_1, \ldots, c_k\}.$$
From \Cref{lazyprem}, it follows that $s(c, A)\le 1$ for all $c\in W$ and $s(c', A)=0$ for all $c'\notin W$. If $s(c, A)=1$ for any $c\in W$, then it can be shown that the voter $i\in N$ with $c\in A_i$ is not submitting a MBR ballot since the priority ordering ($\triangleright$) of $c$ ensures that the alternate ballot $A_i'=A_i\backslash\{c\}$ produces the same election outcome. We conclude that $A_i=\emptyset$ for all $i\in N$ and hence $s(c, A)=0$ for all $c\in C$. But if this is the case, then every voter $i\in N$ can completely determine the election outcome via a unilateral deviation and since $W$ is a lazy-PNE it follows that $W_i^*\subseteq W$ for all $i\in N$. This contradicts the assumption that $W\subsetneq W^*$.

For the final statement, we begin by noting that if $W$ is a lazy-PNE outcome supported by the ballot profile $A$ then $\sigma(k)>k$. Furthermore, recalling \Cref{lazyprem} it must be that at most $k$ candidates in $W$ receive precisely one approval vote and the remaining candidates receive zero approvals. For the purpose of a contradiction suppose that voter $i\in N$ is such that $c_{\sigma(k)}$ is among her top $j^*(i)$ most preferred candidates in $W$ and there exists $c_j\notin W$ with
$$c_j\spref_i c_{\sigma(k)}.$$
 We will show that voter $i$ can change the election outcome to $W':=W\cup\{c_j\}\backslash \{c_{\sigma(k)}\}$ which is strictly preferred by voter $i$ under the full-rank assumption.

Let $\overline{W}$ and $\underbar{W}$ denote the set of candidates in $W$ who receive an approval score of $1$ and $0$, respectively under $A$. Consider the alternate ballot which voter $i$ could submit 
$$A_i'=\underbar{W}\cup\{c_j\}\cup A_i\backslash \{c_{\sigma(k)}\}.$$
Under the ballot profile $A'=(A_i', A_{-i})$ the approval scores of all $k$ candidates in $W\cup\{c_j\}\backslash \{c_{\sigma(k)}\}$ is precisely one approval and all other candidates, except possibly $c_{\sigma(k)}$, receive zero approvals. 

Candidate $c_{\sigma(k)}$ receives either zero or one approval. In the former case $k$ candidates have non-zero approvals and the rest have zero approvals and so the new winning committee under $A'$ is $W':=W\cup\{c_j\}\backslash \{c_{\sigma(k)}\}$. In the latter case the winning committee is still $W'$ since candidate $c_{\sigma(k)}$ has the lowest tie-breaking ranking among all $k+1$ candidates in $W'\cup\{c_{\sigma(k)}\}$. This shows that voter $i$ can change the election outcome to $W'$. It only remains to show that this deviation is strictly profitable to voter $i$ -- under the full-rank assumption this can be shown to hold in a similar manner to \Cref{fullrankpref} and so is omitted. This contradicts the assumption that $W$ is a lazy-PNE and completes  the proof.
\end{proof}

An immediate corollary of the above theorem is as follows. The intuition was discussed in the paragraph prior to \Cref{rlem}.

\begin{corollary}\label{corrrrr}
$W=\{c_1, \ldots, c_k\}$ is a lazy-PNE if and only if $W^*\subseteq W$. That is, every voter's ideal set is contained in the election outcome.
\end{corollary}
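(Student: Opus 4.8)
The plan is to prove the two directions separately, leaning on the two theorems that immediately precede the corollary together with the dichotomy supplied by \Cref{keylemma}. Throughout I take the full-rank assumption to be in force, and I recall that by the labelling in (\ref{doublehash}) the committee $W=\{c_1,\ldots,c_k\}$ consists of precisely the $k$ candidates of highest lexicographic priority.

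For the ``if'' direction I would invoke \Cref{containthm} directly. By hypothesis $W^*\subseteq W$, so \Cref{containthm} applies and tells us that $W$ is a lazy-PNE if and only if every $c\in W-W^*$ has higher priority than every $c'\notin W$. But since $W=\{c_1,\ldots,c_k\}$ is exactly the top-$k$ priority committee, every candidate in $W$ — and in particular every candidate in $W-W^*$ — strictly outranks every $c'\notin W$. Hence the priority condition holds trivially, and \Cref{containthm} yields that $W$ is a lazy-PNE.

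For the ``only if'' direction, suppose $W=\{c_1,\ldots,c_k\}$ is a lazy-PNE. Under the full-rank assumption, \Cref{keylemma} guarantees that exactly one of $W^*\subseteq W$ or $W\subsetneq W^*$ holds. I would rule out the second case using the first statement of \Cref{rlem}: if $W\subsetneq W^*$ held, then \Cref{rlem} would force $\sigma(k)>k$. However, for the specific committee $W=\{c_1,\ldots,c_k\}$ the labelling map satisfies $\sigma(j)=j$ for every $j$, so $\sigma(k)=k$, contradicting $\sigma(k)>k$. Thus $W\subsetneq W^*$ is impossible, and the dichotomy of \Cref{keylemma} leaves only $W^*\subseteq W$, as required.

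The main point to get right — more a subtlety than a genuine obstacle — is that the two directions must draw on the correct hypotheses: \Cref{containthm} already presupposes $W^*\subseteq W$, so it can only supply the ``if'' direction, whereas the ``only if'' direction genuinely needs the dichotomy of \Cref{keylemma} (and hence the full-rank assumption), because \Cref{rlem} on its own only excludes $W\subsetneq W^*$ and cannot by itself rule out an incomparable configuration in which neither $W^*\subseteq W$ nor $W\subsetneq W^*$ holds. I would therefore state explicitly that the full-rank assumption is being used, since it is precisely what delivers the clean two-case split that makes the contradiction argument conclusive.
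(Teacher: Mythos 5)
Your proposal is correct and follows exactly the route the paper intends: the paper presents \Cref{corrrrr} as an immediate consequence of the preceding results, with \Cref{containthm} (whose priority condition holds trivially for the top-$k$ committee) giving the ``if'' direction and the dichotomy of \Cref{keylemma} combined with the first statement of \Cref{rlem} (which forces $\sigma(k)>k$, impossible here since $\sigma(k)=k$) giving the ``only if'' direction. Your remark that the full-rank assumption must be invoked—because \Cref{keylemma} supplies the two-case split that \Cref{rlem} alone cannot—is a careful reading of hypotheses the paper leaves implicit.
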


Combining the above statements leads to the following `if and only if' corollary for the single-winner setting ($k=1$) which was shown in Elkind et al.~\cite{Elk15}. Recall the notation from (\ref{doublehash2}). The result shows that for a lazy-equilibrium in single-winner election there must be a high degree of consensus among voters; either all voters prefer a single candidate, or some candidate is elected who is most preferred by at least one voter and candidate with higher tie-breaking rank are unanimously less preferred. 

\begin{corollary}\label{ifandonlyifk=1}
When $k=1$ we have:

$W=\{c_j\}$ is a lazy-PNE if and only if
\begin{enumerate}
\item $c_j$ is the first preferences of all voters
\item $j>1$, $c_j$ is some voters top preference and for all $\ell<j$ we have that
$$c_j\spref_i c_\ell \qquad \forall i\in N.$$
\end{enumerate}
\end{corollary}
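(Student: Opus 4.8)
The plan is to derive this corollary by specialising the two preceding theorems to the single-winner case $k=1$, where the full-rank assumption is automatically satisfied (as noted after the full-rank definition) and every voter has $j^*(i)=1$, so each ideal set $W_i^*$ is simply the singleton of voter $i$'s top candidate and $W^* = \cup_{i\in N} W_i^*$ is the set of all candidates who are some voter's first preference. I would handle the two cases of the corollary separately, corresponding to the dichotomy $W^*\subseteq W$ versus $W\subsetneq W^*$ established in \Cref{keylemma}.

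For the first case, suppose $c_j$ is the unanimous first preference of all voters. Then $W_i^* = \{c_j\}$ for every $i$, so $W^* = \{c_j\} = W$, and in particular $W^*\subseteq W$. By \Cref{containthm} (with the vacuous second condition, since $W - W^* = \emptyset$ when $|W^*|=k=1$), the committee $W = \{c_j\}$ is a lazy-PNE; alternatively this is immediate from \Cref{corrrrr} once one observes that $c_j = c_1$ in the tie-breaking order is not required, since the supporting profile simply has the one voter whose ideal is $c_j$ cast a ballot for it. Conversely, if $W=\{c_j\}$ with $W^*\subseteq W$ then $W^* = \{c_j\}$, forcing $W_i^*=\{c_j\}$ for all $i$, i.e. $c_j$ is everyone's first preference — this is precisely condition $(i)$.

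For the second case I would invoke \Cref{rlem} directly. When $W\subsetneq W^*$ we must have $|W^*|>1$, so $c_j$ is not the unanimous top choice, and \Cref{rlem} gives first that $\sigma(k)=\sigma(1)=j>1$, which is the leading clause of condition $(ii)$. Since $W=\{c_j\}=\{c_{\sigma(1)}\}$ and $c_{\sigma(1)}$ is trivially among voter $i$'s top $j^*(i)=1$ candidate in $W$ (as $W$ has only one element), the hypothesis of the second part of \Cref{rlem} is automatically met for every voter, and it yields $c_j \spref_i c_\ell$ for all $i\in N$ and all $c_\ell \notin W$ with $\ell < j$ — exactly the unanimity clause of $(ii)$. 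I also need $c_j$ to be some voter's top preference, which follows from $W\subsetneq W^*$ together with \Cref{corrrrr}: since $W^*\not\subseteq W$ is ruled out in a lazy-PNE only in the form $W\subsetneq W^*$, and the elected candidate in a single-winner lazy-PNE must be sustained by some voter's approval or tie-breaking, one checks that $c_j\in W^*$, i.e. $c_j=W_i^*$ for some $i$.

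The main obstacle will be verifying the converse of case $(ii)$ — that conditions $j>1$, $c_j\in W^*$, and the unanimity $c_j\spref_i c_\ell$ for all $\ell<j$ actually \emph{suffice} for $\{c_j\}$ to be a lazy-PNE, since \Cref{rlem} is stated only as a necessary condition. Here I would construct the supporting ballot profile explicitly: let the single voter $i$ with $c_j\in W_i^*$ cast $A_i=\{c_j\}$ and all others abstain, so $s(c_j,A)=1$ and $s(c,A)=0$ elsewhere, making $\{c_j\}$ the outcome. No abstaining voter can profitably deviate, because any candidate $c_\ell$ they could elect (necessarily some $c_\ell$ with $\ell<j$ beating $c_j$ on tie-breaking, or any candidate via a single approval) is either unanimously less preferred to $c_j$ by the unanimity hypothesis, or is beaten by $c_j$'s approval. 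The delicate point is checking the deviations available to voters when $k=1$: a lone approval for $c_\ell$ with $\ell<j$ would elect $c_\ell$ by tie-breaking priority, but the unanimity clause guarantees every voter strictly prefers $c_j$, so no such deviation is profitable; and voter $i$ cannot do better since $c_j$ is already their top choice.
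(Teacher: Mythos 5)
Your proposal is correct and follows essentially the same route as the paper's own proof: it notes the full-rank assumption is automatic for $k=1$, uses the dichotomy of \Cref{keylemma}, derives sufficiency of (i) from \Cref{containthm}, necessity of (ii) from \Cref{rlem}, and establishes sufficiency of (ii) via exactly the same supporting profile (one voter approving $c_j$, all others abstaining). The only difference is that you spell out the deviation and tie-breaking checks that the paper leaves implicit, which is a welcome addition rather than a divergence.
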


\begin{proof}
First note that if $k=1$, then the full-rank assumption holds immediately. 

We begin by proving the forward direction. Suppose $W=\{c_j\}$ is a lazy-PNE, from \Cref{keylemma}, then either $W^*\subseteq W$ or $W\subseteq W^*$. Since $k=1$, for each $i\in N$ the ideal set $|W_i^*|=1$ and so $W^*\subseteq W$ if and only if $W_i^*=c_j$ for all $i\in N$; that is, $c_j$ is every voters first preference. If $W\subsetneq W^*$ then statement (ii) follows from \Cref{rlem}.

We now prove the reverse direction. Suppose (i) holds then $W^*=\{c_j\}\subseteq W$ and so \Cref{containthm} shows that $W$ is a lazy-PNE. Suppose (ii) holds then a lazy-PNE is constructed by letting a voter $i\in N$ with $c_j$ as their top preference submit the ballot $A_i=\{c_j\}$ and all other voters submit $A_{i'}=\emptyset$. 
\end{proof}

One may ask whether there is a relationship, such as set containment, of equilibria when voters have full-rank and non-full-rank utilities. Unfortunately, there does not appear to be an obvious relationship. It is straightforward to construct examples such that an equilibrium with the full-rank assumption need not be an equilibrium without this assumption, and conversely, an equilibrium which holds when the full-rank assumption does not hold need not be an equilibrium when it does. For brevity we omit such an example.

The results of this subsection highlight that under lazy voting, approval voting rules can be inherently unstable if voters have diverse preferences i.e. $|W^*|>k$. As an implication this suggest that institutional features which make voting more costly or encourage voters to submit MBR ballots, say via a ballot-length restriction, may also generate instability. In this setting, the cost of voting is not restricted to financial or time - some recent theoretical and empirical work has looked at the effects of hidden (anonymous voting) or open (non-anonymous voting) polls on voter behaviour (see Obraztsova et al.~\cite{Obr17} and Zou et al.~\cite{Zou15}). The researchers suggest that social factors/benefits can drive voter behaviour and the number of alternatives which they approve. Thus, it is easy to also imagine social factors as being considered as a cost/benefit of voting, especially when elections are highly controversial.


\subsection{Truth-biased voting and sincere Nash} 



As seen in the previous section, `lazy' voting has the undesirable property of instability, via the non-existence of equilibria. In this subsection we consider an alternative tendency of voters to be truthful rather than lazy, called truth-bias. Such a tendency is more likely to arise when voting is costless and a necessary condition is that ballot-length restrictions are not enforced. Thus the analysis in this subsection provides a heuristic comparison of approval-based elections without ballot-length restrictions and when voting is costless, against the results of the previous subsection. The key result shows that an equilibrium always exists under truth-bias tendencies contrasting against the non-existence results shown when voters are lazy. 

We begin by introducing the notion of sincere voting in an approval-based setting. We then show that there always exists a best-response ballot which is also sincere.

Informally, an approval ballot is said to be sincere if no candidate $c$ who is more preferred than an approved candidate $c'\in A_i$ remains unapproved i.e. $c\notin A_i$. This notion was introduced by Brams (1982) (see \cite{Brams82}). A formal definition is provided below.

\begin{definition}\label{sinceredef}\emph{[Sincere voting]}\\
Let $i\in N$ be a voter, we say that an approval ballot $A_i\subseteq C$ is sincere if for all $c\in A_i$ and for all $c'\in C\backslash A_i$ 
$$c\spref_i c'.$$
\end{definition}

We note that in this setting sincere voting is not a single voting action but rather a characterisation of a class of voting actions which are all sincere. In fact, for a given voter with strict preferences $\spref_i$ any sincere ballot can be completely characterised by a threshold number of candidates that are approved. This notion can also be extended to weak preferences by considering ordered equivalence classes of candidates.

We now present a theorem showing the existence of a sincere best-response approval ballot -- this provides additional motivation for considering truth-bias voting as a second-order tendency since it is compatible with utility maximising behaviour under the `AV-rule'\footnote{In fact, this compatibility holds more general for non-degenerate best-$k$ scoring rules.}. The result provides an efficient method for a given voter to construct a sincere and best-response ballot. The proof is left to the appendix, but follows by constructing a sincere and best response ballot -- this is achieved by considering a minimal best response ballot $A_i$ and simply approving of additional candidates such that the approval ballot is now sincere. That is, the voter approves of every candidate in $C$ which is more preferred to a candidate in $A_i$ to construct the sincere and best response ballot $A_i'$. Intuitively this ballot $A_i'$ maintains the voters maximal utility since the voter only approves of more preferred candidates -- the monotonicity properties of the AV-rule ensures that this can not make the voter worse off.


\begin{theorem}\label{prop2}
Let $i\in N$ be a voter and a non-degenerate best-$k$ voting rule. For every ballot profile $A_{-i}$ there exists a sincere and best response (BR) ballot $A_i$ for voter $i$.
\end{theorem}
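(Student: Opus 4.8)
The plan is to start from an \emph{optimal} ballot and ``fill it up'' into a threshold (hence sincere) ballot without sacrificing optimality. First I would observe that $2^C$ is finite, so $B_i(A_{-i})$ is nonempty and contains a minimal best response $A_i$; write $W=W_{(A_i,A_{-i})}$ for its outcome, ordered as $W=\{c_1,\ldots,c_k\}$ with $c_j\spref_i c_{j+1}$. By the removal argument in the proof of \Cref{lem2}, a minimal best response approves only candidates among voter $i$'s top $j^*$ members of $W$ (any approval outside $\{c_1,\ldots,c_{j^*}\}$ could be deleted, preserving the outcome's top $j^*$ by relative rank monotonicity and contradicting minimality); in particular every approved candidate already lies in $W$. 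If $A_i=\emptyset$ it is vacuously sincere, so assume otherwise and let $c_{\min}$ be the $\spref_i$-least candidate in $A_i$. I would then define the threshold ballot
$$A_i'=\{c\in C: c\pref_i c_{\min}\}.$$
By construction $A_i'\supseteq A_i$ and $A_i'$ is a preference prefix, so it satisfies \Cref{sinceredef} immediately; the only real work is to show that $A_i'$ remains a best response.

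Since $U_i(W)$ is the maximum utility available against $A_{-i}$, it suffices to prove $U_i(W_{A'})\ge U_i(W)$ for $A'=(A_i',A_{-i})$. The mechanism is that passing from $A_i$ to $A_i'$ only adds approvals for candidates strictly preferred to $c_{\min}$. I would introduce these new candidates one at a time, in order of \emph{decreasing} preference, so that consecutive ballots differ by a single reinforcement. By \Cref{mon2} (monotonic robustness) each step either leaves the outcome unchanged or swaps the reinforced candidate $c$ into the committee in place of some $c'$, and by \Cref{weakmonoutil} such a swap cannot lower $U_i$ provided $c\spref_i c'$. Chaining these inequalities across all additions gives $U_i(W_{A'})\ge U_i(W)$, and combined with maximality this forces equality, so $A_i'\in B_i(A_{-i})$ is the required sincere best response.

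The hard part is guaranteeing, at every reinforcement, that the displaced candidate $c'$ is less preferred than the incoming candidate $c$, which is exactly the hypothesis \Cref{weakmonoutil} needs. I would rule out the bad case ($c'\spref_i c$) as follows: such a $c'$ would satisfy $c'\spref_i c\spref_i c_{\min}$, hence $c'\in A_i'$, and because additions are processed in decreasing-preference order this more-preferred $c'$ is already approved (either it lay in $A_i$ or it was added at an earlier step). A short score comparison then shows a newly approved, less-preferred candidate cannot overtake an already-approved, more-preferred one: a candidate that newly enters the committee must have had its score raised, so both candidates carry voter $i$'s approval under $A'$, and the original membership of $c'$ in $W$ bounds the base (other-voter) scores enough to prevent the swap. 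The one genuinely fiddly point is the sub-case where two simultaneously boosted candidates tie and lexicographic tie-breaking ($\triangleright$) decides between them; dispatching it requires tracking the base scores under both $A$ and $A'$.

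As a possibly cleaner alternative for the last step, I would try to prove the stronger claim that $W_{A'}$ \emph{pointwise dominates} $W$ in voter $i$'s preference order, meaning that for each rank $r$ the $r$-th most preferred candidate of $i$ in $W_{A'}$ is at least as preferred as the $r$-th in $W$. Because every $\lambda_j^{(i)}\ge 0$ and $u_i$ is order-preserving, pointwise domination yields $U_i(W_{A'})\ge U_i(W)$ at once from the dot-product form (\ref{genutil2}). One would establish the domination by arguing that every candidate entering the committee is preferred to $c_{\min}$ while every candidate leaving is weakly less preferred, invoking \Cref{mon3} to certify that the highest-priority approved candidates persist.
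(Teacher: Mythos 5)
Your plan follows the same route as the paper's own proof: take a minimal best response $A_i$, note that it sits inside the top $j^*$ candidates of its outcome $W$, and grow it into a threshold ballot one reinforcement at a time, arguing that no single addition can hurt. The first half of your argument (every MBR is contained in the top $j^*$ of $W$, so the threshold ballot is well defined and sincere) is correct. The genuine gap is the step you yourself flag as ``fiddly'', and it is not a fiddly detail but the heart of the theorem: the claim that a newly approved, less-preferred candidate cannot overtake an already-approved, more-preferred one does \emph{not} follow from any local score comparison. Write $s_0(\cdot)$ for the scores induced by $A_{-i}$ alone. The situation ``$c'$ (approved, score $s_0(c')+1$) is ranked above $c$ (score $s_0(c)$) before the reinforcement, while $c$ (score $s_0(c)+1$) is ranked above $c'$ (score $s_0(c')+1$) after it'' is jointly satisfiable in exactly two configurations: $s_0(c)=s_0(c')$ with $c\triangleright c'$, and $s_0(c)=s_0(c')+1$ with $c'\triangleright c$. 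The fact that $c'\in W$ under the original profile only adds $s_0(c)\le s_0(c')+1$ (modulo tie-breaking), which excludes neither configuration. So ``tracking the base scores'' cannot dispatch these cases; if the bad swap is ruled out at all, it is ruled out by the \emph{global} optimality and minimality of $A_i$ (in such configurations one typically shows that voter $i$ could otherwise have reached a strictly better committee, contradicting that $A_i$ is a BR, or that $A_i\setminus\{c_{\min}\}$ was already a BR, contradicting minimality), and your argument never invokes these properties at this point -- minimality is used only to place $A_i$ inside the top $j^*$ of $W$.

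This is exactly where the paper's proof deploys a different, non-local argument, which is why it cannot be replaced by score bookkeeping: assuming the addition of $c$ strictly hurts, the paper removes the least preferred approved candidate $c_\ell$, uses minimality to conclude that $A_i\setminus\{c_\ell\}$ produces a strictly worse outcome $W^{-1}=W\cup\{c_L\}\setminus\{c_\ell\}$ with $c_\ell\spref_i c_L$ (hence $c_\ell\notin W^{-1}$), then reaches $A_i\cup\{c\}$ along the alternative path $A_i\setminus\{c_\ell\}\to(A_i\setminus\{c_\ell\})\cup\{c\}\to A_i\cup\{c\}$, applying \Cref{mon2} to each single reinforcement, and derives a contradiction about whether $c_\ell$ is elected. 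Your fallback suggestion fares no better: \Cref{mon3} cannot ``certify that the highest-priority approved candidates persist'', because its hypothesis requires every other candidate's score to weakly \emph{decrease}, whereas in your filling process several candidates' scores have strictly increased, so the property simply does not apply to protect $c'$. As written, then, the proposal establishes the easy half (a sincere superset of the MBR is the right candidate ballot) but leaves the essential half unproved, and closing it requires an argument of the paper's minimality-plus-robustness type rather than the score comparison you describe.
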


Notice that for any voter there are just $(m+1)$ sincere approval ballots. Thus, the mere existence of a sincere and BR approval ballot means that a voter can consider just $(m+1)$ approval ballots to determine and find a BR ballot, rather than searching $2^m$ approval ballots. This is an efficient method for voters to construct a BR ballot.

The above theorem shows that submitting a sincere voting is a weakly dominant strategy - however, it does not imply the existence of a sincere equilibrium since sincere voting is not a single action. We also emphasise that the existence of a sincere and best-response ballot does not mean the voting rule is strategy-proof. On the contrary this suggest that even when a voter is restricted to sincere approval ballots, strategic manipulations are still possible (see for example Niemi~\cite{Nrg84}). Thus, the well-known conclusions of the Gibbard-Satterthwaite Theorem~\cite{Gib, Sat}, and the multi-winner analog by Duggan and Schwartz~\cite{DjSt00}, do not apply.

\begin{remark}
As a point of clarification the above result appears to be in conflict with the conclusions of the related work Laslier and Van der Straeten~\cite{Las16}. In particular in Proposition 4 of their paper, it is shown that in general a voter submitting a BR ballot may preclude the possibility of sincere voting. The researchers consider a model of incomplete information (referred to as the score uncertainty model) but show this result as the difference between the incomplete and complete information settings become arbitrarily small. 

The apparent conflict arises due to differences in the definition/description of best response ballots. Under the formulation used by Laslier and Van der Straeten~\cite{Las16} a best response ballot of a voter $i$ is not simply an approval ballot which maximises the expected utility for voter $i$ under unilateral deviation. Rather the best response ballot referred to in \cite{Las16} corresponds to a ballot derived from a sequence of pairwise comparisons and decisions upon which candidate to include in the approval ballot. This formulation of BR is based on a behavioural rule referred to as the `Leader Rule' (see Laslier~\cite{Las09} for further details). Thus, there is no conflict in the result presented in this paper and those presented in Laslier and Van der Straeten~\cite{Las16}.
\end{remark}


Given \Cref{prop2}, it is clear that a voter is never adversely affected by being restricted to, or having a second-order preference for, sincere approval ballots. That is, a sincere and best response approval ballot is guaranteed to exist. Thus, the notion of sincere, or truth-bias, voting is compatible with utility maximising voters. Formally we define truth-bias voting below.

\begin{definition}\emph{[Truth-bias voting]}\\
We shall say voters engage in \emph{truth-bias voting} when given $A_{-i}$ they choose an approval ballot $A_i$ from the set of best response (BR) ballots to $A_{-i}$ which is also sincere.
\end{definition}


The truth-bias action only applies to the set of best-response ballots and hence can be viewed as a second-order tendency since the preference for sincere approval ballots is only applied after first maximising utility from the election outcome. Below we define sincere equilibria which was also considered in \cite{DbLjf10} as an equilibrium refinement.

\begin{definition}\emph{[Sincere-PNE]}\\
An election outcome $W$ is a sincere-PNE supported by ballot profile $A$ if no voter has an incentive to unilaterally deviate under truth-bias voter preferences. We denote an equilibrium pair by $(W, A)$.
\end{definition}



Note that even if a voter $i$ is indifferent between submitting a sincere and an insincere ballot (under unilateral deviations) the choice can have distinct implications for other voters' actions and then in turn produce different election outcomes. 

The next proposition shows that under truth-bias voting an equilibrium is guaranteed to exists, so long as the number of voters exceeds the number of candidates. This result is in contrast to the non-existence results shown when voters are lazy. The intuition for the proposition is as follows: if voters are truth-biased and their ballot does not affect the election result then they have no incentive to deviate (as long as the ballot is sincere). Whilst, under lazy voting in such a situation a voter would remove their ballot entirely which not only generates volatility in voter ballots but also creates greater opportunities for voters to influence the election outcome. 

More technically, the proof of the proposition constructs an equilibrium by constructing ballot profiles which ensures that all voters vote sincerely (possibly by abstaining) and the elected candidates have approval scores of at least two and all other candidates have zero approvals. Thus, no single voter can influence the election outcome and even though every non-abstaining voter need not vote to maintain the election result -- since voters are truth-biased rather than lazy -- there is no incentive to deviate.

\begin{theorem}
Assume $n>m$. A sincere Nash equilibrium always exists.
\end{theorem}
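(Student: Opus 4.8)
The plan is to exhibit an explicit sincere-PNE rather than argue existence abstractly. Following the intuition stated before the theorem, I would look for a committee $W$ with $|W|=k$ together with a sincere ballot profile $A$ (abstention allowed) such that every $c\in W$ receives approval score $s(c,A)\ge 2$ while every $c\notin W$ receives $s(c,A)=0$; the outcome is then necessarily $W$, independent of tie-breaking. The first task is to reduce this to a purely combinatorial problem. Since a sincere ballot is exactly a ``top-set'' (by \Cref{sinceredef}, $A_i$ consists of $i$'s $|A_i|$ most preferred candidates) and no voter may approve a candidate outside $W$, every participating voter must approve a prefix of her preference order that is contained in $W$. Hence finding a suitable profile is equivalent to finding $W$ of size $k$ that can be \emph{doubly covered} by such prefixes: for each $c\in W$ there are at least two voters $v$ all of whose candidates ranked weakly above $c$ lie in $W$.

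The construction of such a $W$ is where the hypothesis $n>m$ is essential, and I expect it to be the main obstacle. I would build $W$ greedily, one candidate at a time, starting from $W=\emptyset$. At a generic stage with $|W|=t<k$, assign to each voter $v$ the candidate $g_v$ that $v$ most prefers among $C\setminus W$; this is well defined because $t<k\le m$ forces $C\setminus W\neq\emptyset$. The values $g_v$ range over the $m-t$ candidates outside $W$, and since $n>m\ge m-t$, the pigeonhole principle yields a candidate $x\notin W$ with $g_v=x$ for at least two voters. Adding $x$ to $W$ makes $x$ doubly covered (each of those two voters now has her whole prefix up to $x$ inside $W\cup\{x\}$), and coverage is monotone: enlarging $W$ can only turn more prefixes into subsets of $W$, so previously covered candidates stay covered. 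After $k$ steps $|W|=k$ and every member is doubly covered; letting each voter approve her maximal prefix contained in $W$ (empty ballots allowed) then yields the desired profile $A$, with all ballots sincere.

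It remains to verify that $(W,A)$ is a sincere-PNE, i.e.\ that no voter $i$ has a strictly profitable unilateral deviation. A single voter changes every approval score by at most one, so any $c\in W$ ends with score $\ge 1$ and any $c'\notin W$ ends with score $\le 1$. Consequently a member $c$ can leave the committee only if $i$ had approved it (dropping its score from exactly two to one) and some $c'\notin W$ that $i$ newly approves ties it and wins the tie-break. But if $i$ approved $c$ sincerely then, her ballot being a top-set contained in $W$, we must have $c\spref_i c'$ for every $c'\notin W$. Thus any reachable deviation outcome $W'$ is obtained from $W$ by replacing approved (hence, for $i$, strictly more preferred) members with strictly less preferred outsiders; applying \Cref{weakmonoutil} once per swap gives $U_i(W')\le U_i(W)$. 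Hence $A_i$ is a best response, and since it is also sincere, truth-bias gives $i$ no incentive to deviate. I would take particular care over two points in the write-up: handling simultaneous multi-candidate deviations by ordering the swaps so that each step replaces an outsider with a more preferred committee member (so that \Cref{weakmonoutil} applies stepwise), and confirming that the doubly-covered profile really elects $W$ under lexicographic tie-breaking, which holds because exactly $k$ candidates carry positive score.
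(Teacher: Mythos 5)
Your proof is correct, and its core engine is the same as the paper's: a pigeonhole-based greedy selection (possible exactly because $n>m$) of $k$ candidates each of which is the top remaining choice of at least two voters, yielding a committee whose members score at least $2$ while outsiders score $0$, followed by a no-profitable-deviation check. However, your execution differs in two places, and in both the difference is an improvement over the paper's own write-up. First, the paper builds the ballots \emph{incrementally}: at step $t$ every voter whose top remaining candidate is $c_t^*$ adds $c_t^*$ to her current ballot. From the third step onward this can produce insincere ballots (a voter may prefer an earlier-selected candidate $c_s^*$ that she never approved because it was not her top candidate at step $s$, e.g.\ preferences $c_2^*\spref c_1^*\spref c_3^*$ give her the ballot $\{c_2^*,c_3^*\}$, which violates \Cref{sinceredef}); such a voter would then have a truth-bias incentive to switch to an equally good sincere ballot, so the constructed profile need not be a sincere-PNE as defined. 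Your two-phase construction --- fix $W$ first, then let each voter approve her \emph{maximal prefix contained in $W$} --- makes every ballot sincere by definition and repairs this. Second, the paper's equilibrium verification is the one-liner that a single voter moves each score by at most one; this glosses over the fact that a deviator \emph{can} change the outcome (drop an approved member from score $2$ to $1$ and promote an outsider to $1$ that wins the lexicographic tie-break). Your argument closes exactly this gap: any displaced candidate must lie in the deviator's own sincere ballot and hence is preferred to every entering outsider, so iterating \Cref{weakmonoutil} swap by swap shows the deviation is never strictly profitable. So while the strategy is the paper's, your proof is the more watertight of the two.
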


\begin{proof}
Given any positive integer $k$ we provide construct an approval ballot profile which is a sincere-PNE. Let $E\subseteq C$ be the set of candidates who are the first preference of some voter $i\in N$ among candidates in $C$. 

Since $n>m$ by the pigeonhole principle ($|E|\le |C|<m$) there exists at least one candidate, say $c_1^*\in E$, with at least two voters having $c_1^*$ as their first preference among candidates in $C$. Select this candidate $c_1^*$ and for all voters $i$ who have $c_1^*$ as their first preference let $A_i=\{c_1^*\}$. Note that this is indeed a sincere ballot.

Now update the set $E$ to be the set of candidates who are the first preference of some voter $i\in N$ among candidates in $C\backslash \{c_1^*\}$.  The set $E$ includes candidates which are the second preference of voters who had $c_1^*$ as their first preference. 

 The size of $E$ is at most $m-1$ and there are still $n$ voters and so by pigeonhole principle there exists at least one candidate, say $c_2^*$, with at least two voters with $c_2^*$ in their first preference among candidates in $C\backslash \{c_1^*\}$. For all such voters let $A_i\rightarrow A_i\cup\{c_2\}$, again this is a sincere ballot. 
 
Repeating this process a total of $k$ times. We end up with $k$ candidates each with approval scores at least two and only sincere approval ballots. This is a sincere-PNE, since any single voter changing their approval ballot can alter the score of candidates by at most one.

 
If we require that voters submit non-empty approval ballot then for any remaining voters with $A_i=\emptyset$ can simply submit approval ballots $A_i=C$ which does not change any of the relative approval scores and is also sincere.
 \end{proof}

The above proposition shows a sense of stability under truth-biased voting due to the guaranteed existence of equilibria, which was not present under lazy voting. This can be seen as a desirable effect of reducing the cost of voting and not implementing ballot-length restrictions. However, this `stability' comes at a cost since the equilibria outcome may be less reasonable than under lazy voting. As seen in the previous subsection for lazy-PNE there is a close relationship between equilibria $W$ and the ideal set $W^*$ (see \Cref{keylemma}) - this relationship no longer holds under truth-bias voting. The following examples illustrates this point.

\begin{example}\emph{[Lazy vs sincere PNE]}\\
Consider the situation where $k=2$, $C=\{a, b, c, d\}$ and $N=\{1, 2, 3\}$ with full-rank utilities and strict preferences 
\begin{align*}
\spref_1:& \qquad a, \, b, \, c, \, d\\
\spref_2, \spref_3:& \qquad c, \, d, \, a, \, b.
\end{align*}
If $j^*(1)=1$ and $j^*(2)=j^*(3)=2$ then the election $W=\{a, b\}$ is a sincere-PNE supported by 
$$A_1=\{a, b\}\qquad \text{ and }\qquad A_2=A_3=\emptyset.$$
This is despite $W^*=\{a, c, d\}$. However, such an outcome can not be sustained under lazy voting since we require that either $W^*\subseteq W$ or $W\subseteq W^*$. In fact, the unique lazy-PNE is $W=\{a, c\}$.

If voter utilities are considered comparable and voter $2$ and $3$ are identical, we see that the combined utility of voters in the sincere-PNE $W=\{a,b\}$ is 
\begin{align}\label{finaleq1}
\lambda_1^{(1)}u_1(a)+2\Big(\lambda_1^{(1)}u_2(a)+\lambda_2^{(1)}u_2(b)\Big),
\end{align}
whilst under the lazy-PNE $W=\{a,c\}$ the combined utility is 
$$\lambda_1^{(1)}u_1(a)+2\Big(\lambda_1^{(1)}u_2(c)+\lambda_2^{(1)}u_2(a)\Big),$$
which is strictly greater than (\ref{finaleq1}). Thus the lazy-PNE is socially more desirable.
\end{example}

\section{Conclusion}


In this work we considered how voters vary their approval ballot-lengths for differing utility functions. In particular, we characterised the length of minimal best-response (MBR) ballots. MBR ballots are a key object when considering the effects of ballot-length restrictions, since if a MBR ballot exceeds the restriction then no best-response (BR) ballot will be a feasible ballot. Thus, we were able to provide clear results into when a ballot-length restriction would prevent a voter from achieving their maximal utility under unilateral deviations. This led to two key insights; firstly, if ballot-lengths are restricted to the size of the committee to be elected then no voter will ever be prevented from submitting a BR ballot, secondly, voters whose utility does not depend on all of the elected candidates are less likely to be affected by ballot-length restrictions. 

We then presented equilibria analysis for voters with second-order tendencies such as laziness and truth-bias. The key results showed that lazy voting is inherently unstable since equilibria need not exist, whilst the opposite holds for truth-bias voting. This suggest that institutional features which increase the cost of voting or ballot-length restriction may encourage lazy voting and hence generate instability. This however, should be coupled with the fact that equilibria (when they exist) under lazy voting appear to be more desirable.


	\section*{Acknowledgments}
	

	I acknowledge the support from the UNSW Scientia PhD fellowship. I would like to thank Haris Aziz, Juan Carlos Carbajal, Gabriele Gratton, Richard Holden, Hongyi Li,  and Lirong Xia for helpful comments and fruitful discussions. In addition, I thank the participants of the Data61/CSIRO Algorithmic Decision Theory (ADT) group and the UNSW Econ-Theory workshop.

\section*{Appendix}



\begin{proof}[Proof of \Cref{genutilstrict}]
To prove this statement we utilise the dot-product interpretation of utilities (recall Equations (\ref{genutil}) and (\ref{genutil2})) i.e. for a given set $S\subseteq C$ of size $k$
$$U_i(S)=\lambda \cdot  \hat{u}_{S}:=\lambda(i) \cdot \Big(u_i(c_1), \ldots, u_i(c_j), \ldots, u_i(c_{j^*})\Big),$$
where $\lambda(i)\in \mathbb{R}_{\ge 0}^{j^*}$, $\lambda_{j^*}(i)>0$ and $\hat{u}_S$ is an ordered (descending) vector of voter $i$'s utility from their $j^*$ most preferred candidates in $S$. We denote voter $i$'s $j$-th most preferred candidate in $S$ by $c_j\in S$.

Now label the elements of $W$ and $W'$ according $i$'s preferences so that
$$W=\{c_1, \ldots,c_{j^*}, \ldots,  c_k\}\qquad\text{ and }\qquad W'=\{c_1', \ldots, c_{j^*}',\ldots, c_k'\}.$$
Now since $W_i^*\subseteq W$ we have
$$W_i^*=\{c_1, \ldots,c_{j^*}\}.$$
That is, voter $i$'s $j^*$ most preferred candidates in $W$ and $C$ coincide. Now consider the elements of $W'$ for every $j\le j^*$ we have
$$u_i(c_j')\le u_i(c_j).$$
Now suppose that we have equality for all $j$, then it must be that $c_j'=c_j$ for all $j\le j^*$ since preferences are strict. But then $W_i^*\subseteq W'$ which is a contradiction. And so it must be that strict inequality hold for at least one $j\le j^*$, say $j=\ell$ (let $\ell$ be the smallest such positive integer where strict inequality is attained). 

If $\ell=j^*$, then we have $U_i(W)>U_i(W')$ since $\hat{u}_W\ge \hat{u}_{W'}$ (component-wise inequality) and $\hat{u}_W\neq \hat{u}_{W'}$ and the component such that strict inequality is attained occurs at $\ell=j^*$ where $\lambda_{j^*}(i)>0$. 

Now assume $\ell<j^*$, we can attain a similar conclusion since it must be that 
$$u_i(c_{\ell+1}')<u_i(c_{\ell+1}).$$
This follows from the fact that if $u_i(c_j')<u_i(c_j)$ it must be that $u_i(c_j')\le u_i(c_{j+1})$, and so
$$u_i(c_{j+1}')<u_i(c_j')\le u_i(c_{j+1}).$$
Hence the same conclusion follows that $u_i(W)>u_i(W')$.
\end{proof}


\begin{proof}[Proof of \Cref{fullrankpref}]
Label the elements of $W$ and $W'$ according $i$'s preferences so that
$$W=\{c_1, \ldots,c_{j^*}, \ldots,  c_k\}\qquad\text{ and }\qquad W'=\{c_1', \ldots, c_{j^*}',\ldots, c_k'\}.$$
Now $c^*\in W_i^*\cap W$ and suppose $c^*=c_\ell\in W$ for some $\ell\le j^*$. It follows that
$$u_i(c_j)\ge u_i(c_j')\qquad \forall j\le j^*,$$
with strict inequality holding for $j=\ell$ since $c^*=c_\ell \notin W'$. Thus,
$$U_i(W)\ge U_i(W').$$

Now if $\lambda_j>0$ for all $j\le j^*(i)$ and noting that $c^*$ is necessarily among voter $i$'s top $j^*(i)$ most preferred candidates in $W$ we have
$$U_i(W)> U_i(W').$$
Thus, under the full-rank assumption the final statement in the lemma follows. 
\end{proof}

\begin{proof}[Proof of \Cref{prop2}]
Denote voter $i$'s set-extension value by $j^*$. Let $A_i$ be voter $i$'s MBR ballot leading to election outcome $W$ which maximises utility $U_i(W)$. Denote the elements of $A_i$ as
$$A_i=\{c_1, \ldots, c_\ell\},$$
such that $c_1\spref_i c_2\spref_i\cdots \spref_ic_\ell$. Note that $|A_i|=\ell\le j^*$ by \Cref{lem2}.

The proof is inductive, and so we show that adding a candidate $c\spref_i c_\ell$ and $c\notin A_i$ to voter $i$'s ballot can not reduce the utility and hence remains a best-response.  First note that if $c\in W$, then $A_i\cup\{c\}$ does not change the election outcome, by the monotonically robust property, and so it is also a best response and we are done.

Now, let $c\notin W$ such that $c\spref_i c_\ell$ and consider 
$$A_i'=A_i\cup\{c\},$$
with outcome $W'$. Since $A_i$ is a best response it must be that $U_i(W')\le U_i(W)$. If equality holds, then we are done. For the purpose of a contradiction suppose that strict inequality holds i.e.
\begin{align}\label{starhash}
U_i(W')<U_i(W),
\end{align}
and so $W'\neq W$. Equation~(\ref{starhash}) combined with the monotonically robust property implies that
$$W'=(W\cup\{c\})\backslash \{c'\},$$
for some $c'\spref_i c\spref_i c_\ell$, so that $c$ replaces a more preferred candidate $c'$ in the winning committee. Note that $c_\ell \in W'$ since $c'\neq c_\ell$. 

Now consider $A_i^{-1}=A_i\backslash \{c_\ell\}$ with outcome $W^{-1}$. Since $A_i$ is a MBR and $|A_i^{-1}|<|A_i|$ it must be that $U_i(W^{-1})<U_i(W)$ and so $W^{-1}\neq W$. Again by the monotonically robust property it must be that
\begin{align}\label{MBRlink}
W^{-1}=(W\cup\{c_L\})\backslash \{c_\ell\},
\end{align}
for some $c_L$ such that $c_\ell\spref_i c_L$. That is, $c_\ell$ is replaced with a less preferred candidate $c_L$ in the winning committee. Note that $c_\ell\notin W^{-1}$. 

Now consider $\hat{A}_i=A_i^{-1}\cup\{c\}$ with outcome $\hat{W}$. Since $\hat{A}_i$ is simply a reinforcement of $c$ from $A_i^{-1}$ it must be that
\begin{align}\label{contr1}
\hat{W}&=\begin{cases}
W^{-1} &\text{or, }\\
W^{-1}\cup\{c\}\backslash \{e\} &\text{for some $e\in W^{-1}$,}
\end{cases}
\end{align}
by the monotonically robust property. That is, either no change occurs to the winning committee, or $c$ is added to the committee and some other previously elected candidate is removed. But note also that $\hat{A}_i=A_i'\backslash \{c_\ell\}$, and so $A_i'$ is a reinforcement of $c_\ell$ from $\hat{A}_i$, this implies that (again by the monotonically robust property)
\begin{align}\label{contr2}
W'&=\begin{cases}
\hat{W} &\text{or, }\\
\hat{W}\cup\{c_\ell\}\backslash \{f\} &\text{for some $f\in \hat{W}$.}
\end{cases}
\end{align}
But recall that $c_\ell \notin W^{-1}$ and $c\neq c_\ell$ and so by (\ref{contr1})
\begin{align}\label{contr3}
c_\ell \notin \hat{W}.
\end{align}
Now equation~(\ref{contr2}) implies that $W'=\hat{W}$. But $c_\ell\in W'$ and so $c_\ell\in \hat{W}$ which contradicts (\ref{contr3}).

This shows the required result when $A_i$ is a MBR, inductively we can repeat this process to generate a sincere and best response ballot. 

The only point of clarification required is in step (\ref{MBRlink}), however any updated $A_i$ which is a BR but no longer MBR will still satisfy this result. This follows from the fact that replacing $A_i$ with $A_i\cup \{c\}$ means that $A_i^{-1}=A_i\cup\{c\}\backslash \{c_\ell\}$ is simply a reinforcement of $c$ with respect to the original MBR ballot. We conclude that the only possible change is that $c\in W^{-1}$ - which means that $c_\ell$ will still never be in the outcome $W^{-1}$. Thus, the induction holds. 
\end{proof}


		\end{document}